\def\submitteddate{\today}
\renewcommand{\baselinestretch}{1.0}
\begin{document}

\newcommand{\creationtime}{\today \ \ @ \theampmtime}

\pagestyle{fancy}
\renewcommand{\headrulewidth}{0cm}
\chead{\footnotesize{Dougherty-Freiling-Zeger}}
\rhead{\footnotesize{\submitteddate}}
%\rhead{\footnotesize{\creationtime}}
%\lhead{Version \version}
\lhead{}
\cfoot{Page \arabic{page} of \pageref{LastPage}} % Need package ``lastpage''

\renewcommand{\qedsymbol}{$\blacksquare$} % Used for qed gen. by \end{proof}

%\numberwithin{equation}{section}    % Make sure this line comes before the
                                     % next line redefining \theequation
%\renewcommand{\theequation}{\arabic{section}.\arabic{equation}}

\newtheorem{theorem}              {Theorem}    % [section]
\newtheorem{lemma}      [theorem] {Lemma}
\newtheorem{corollary}  [theorem] {Corollary}
\newtheorem{proposition}[theorem] {Proposition}
\newtheorem{remark}     [theorem] {Remark}
\newtheorem{algorithm}  [theorem] {Algorithm}
\newtheorem{conjecture} [theorem] {Conjecture}
\newtheorem{example}    [theorem] {Example}

\theoremstyle{definition}
\newtheorem{definition} [theorem] {Definition}
\newtheorem*{claim}  {Claim}
\newtheorem*{notation}  {Notation}

%\setcounter{\section}{\thesection.}

% The line below adds a period after the Section number in section titles.
% \makeatletter\renewcommand{\@seccntformat}[1]{\noindent
%     {\csname the#1\endcsname}.\hspace{0.5em}}\makeatother

%\newcommand{\Z}{\mathbf{Z}}
%\newcommand{\N}{\mathbf{N}}
\newcommand{\R}{\mathbf{R}}

\newcommand{\theranksymbol}{\mathsf{rank}}
\newcommand{\rank}[1]{\ \theranksymbol \left(#1\right)}
\newcommand{\nullity}[1]{\mathsf{N}(#1)}
\newcommand{\Dim}[1]{\mathsf{dim}(#1)}
\newcommand{\Comment}[1]{& [\mbox{from  #1}]}
\newcommand{\Pair}[2]{{#1},{#2}}
\newcommand{\Triple}[3]{{#1},{#2},{#3}}
\newcommand{\Quadruple}[4]{{#1},{#2},{#3},{#4}}

\newcommand{\filewebsite}{\url{http://zeger.us/linrank}}

\renewcommand{\emptyset}{\varnothing} % redefine to look better
\renewcommand{\subset}{\subseteq}     % redefine to look better
\newcommand{\TBA}{*** To Be Added ***}

\begin{titlepage}

\setcounter{page}{0}

\title{Linear rank inequalities on five or more variables
\thanks{This work was supported by the Institute for Defense Analyses, the
National Science Foundation, and the UCSD Center for Wireless
Communications.\newline
\indent \textbf{R. Dougherty} is with the Center for Communications Research,
4320 Westerra Court, San Diego, CA 92121-1969 (rdough@ccrwest.org).\newline
\indent \textbf{C. Freiling} is with  the Department of Mathematics,
California State University, San Bernardino, 5500 University Parkway,
San Bernardino, CA 92407-2397 (cfreilin@csusb.edu).\newline
\indent \textbf{K. Zeger} is with the Department of Electrical and Computer
Engineering, University of California, San Diego, La Jolla, CA 92093-0407
(zeger@ucsd.edu).
}}

\author{Randall Dougherty, Chris Freiling, and Kenneth Zeger\\ \ }
% Christopher F. Freiling

\date{
%\textit{IEEE Transactions on Information Theory\\
%Version: \version\\
%Created: \creationtime
%Submitted: \submitteddate
\today \\
%\Huge{Draft *** Do Not Distribute ***}
}

\maketitle
\begin{abstract}
Ranks of subspaces of vector spaces satisfy all linear inequalities
satisfied by entropies (including the standard Shannon inequalities)
and an additional inequality due to Ingleton.  It is known that the
Shannon and Ingleton inequalities generate all such linear rank
inequalities on up to four variables, but it has been an open
question whether additional inequalities hold for the case of five
or more variables.
Here
we give a list of 24 inequalities which, together with the
Shannon and Ingleton inequalities, generate all linear rank
inequalities on five variables. We also give a partial list of
linear rank inequalities on six variables and general results which
produce such inequalities on an arbitrary number of variables;
we prove that
there are essentially new inequalities at each number of
variables beyond four (a result also proved recently by Kinser).
\end{abstract}

% \addtocontents{toc}{[This Table of Contents is provided for the review
% process and will be removed for publication.]\\\endgraf}
% \addtocontents{toc}{\hfill Page\endgraf}
%
% \addcontentsline{toc}{section}{Title, Authors, Abstract}

\thispagestyle{empty}
\end{titlepage}

% \begin{spacing}{0.8}
% \tableofcontents
% \end{spacing}
%\tableofcontents
\clearpage

% \section{Introduction}
% \label{sec:intro}

\section{Introduction}

It is well-known that the linear inequalities always satisfied by
ranks of
subspaces of a vector space (referred to here as \textit{linear
rank inequalities}) are closely related to the linear
inequalities satisfied by entropies of jointly distributed
random variables (often referred to as \textit{information
inequalities}).  For background material on this relationship
and other topics used here, a useful source is
Hammer, Romashchenko, Shen, and Vereshchagin
\cite{Hammer-Romashchenko-Shen-Vereshchagin97b}.

The present paper is about linear rank inequalities; nonetheless,
the basic results from information theory will be useful enough that
we choose to use the notation of information theory here.  We use
the following common definitions:
\begin{align*}
H(A|B) &= H(\Pair{A}{B}) - H(B) \\
I(A;B) &= H(A) + H(B) - H(\Pair{A}{B}) \\
I(A;B|C) &= H(\Pair{A}{C}) + H(\Pair{B}{C}) - H(\Triple{A}{B}{C}) -
H(C)
\end{align*}

There are two interpretations of these equations. When $A$, $B$, and
$C$ are random variables, $\Pair{A}{B}$ denotes the joint random
variable combining $A$ and $B$;  $H(A)$ is the entropy of $A$;
$H(A|B)$ is the entropy of $A$ given $B$; $I(A;B)$ is the mutual
information of $A$ and $B$; and $I(A;B|C)$ is the mutual information
of $A$ and $B$ given $C$.

But when $A$, $B$, and $C$ denote subspaces of a vector space, then
$\Pair{A}{B}$ denotes the space spanned by $A$ and $B$, which is
$\langle A,B\rangle$ or, since $A$ and $B$ are subspaces, just
$A+B$; $H(A)$ is the rank of $A$; $H(A|B)$ is the excess of the rank
of $A$ over that of $A \cap B$; $I(A;B)$ is the rank of $A \cap B$;
and $I(A;B|C)$ is the excess of the rank of $(A+C) \cap
(B+C)$ over that of $C$.  In either interpretation, the
equations above are valid.

The basic Shannon inequalities state that $I(A;B|C)$ (as well as the
reduced forms $I(A;B)$, $H(A|B)$, and $H(A)$) is nonnegative for any
random variables $A,B,C$.  Any nonnegative linear combination of
basic Shannon inequalities is called a Shannon inequality. We will
use standard Shannon computations such as
$I(A;B|C)=I(A;\Pair{B}{C})-I(A;C)$ (one can check this by expanding
into basic $H$ terms) and $H(A|C) \ge H(A|\Pair{B}{C})$ (because the
difference is $I(A;B|C)$) throughout this paper; an excellent source
for background material on this is Yeung~\cite{Yeung-book}.

A key well-known fact is that all information inequalities (and in
particular the Shannon inequalities) are also linear rank
inequalities for finite-dimensional vector spaces.  To see this,
first note that in the case of a \textit{finite} vector space $V$
over a finite field $F$, each subspace can be turned into a random
variable so that the entropy of the random variable is the same (up
to a constant factor) as the rank of the subspace: let $X$ be a
random variable ranging uniformly over $V^*$ (the set of linear
functions from $V$ to $F$), and to each subspace $A$ of $V$
associate the random variable $X\upharpoonright A$.  The entropy of
this random variable will be the rank of $A$, if entropy logarithms
are taken to base $|F|$.  For the infinite case,
% the argument can be outlined as follows:
one can use the theorem of Rado~\cite{Rado}
that any representable matroid is representable over a finite field,
and hence any configuration of finite-rank vector spaces over any
field has a corresponding configuration over some finite field.

The converse is not true; there are linear rank inequalities which
are not information inequalities.  The first such example is the
Ingleton inequality, which in terms of basic ranks or joint
entropies is
\begin{multline*}
     H(A) + H(B) + H(\Pair{C}{D}) + H(\Triple{A}{B}{C}) + H(\Triple{A}{B}{D}) \\
        \le H(\Pair{A}{B}) + H(\Pair{A}{C}) + H(\Pair{B}{C}) + H(\Pair{A}{D}) + H(\Pair{B}{D}),
\end{multline*}
but which can be written more succinctly using the $I$ notation as
\begin{equation*}
     I(A;B) \le I(A;B|C)+I(A;B|D)+I(C;D).
\end{equation*}
Ingleton~\cite{Ingleton} proved this inequality and asked whether there
are still further independent inequalities of this kind.

A key tool used by
Hammer et al.~\cite{Hammer-Romashchenko-Shen-Vereshchagin97b}
is the notion of common information.
A random variable $Z$ is a \textit{common information} of random
variables $A$ and $B$ if it satisfies the following conditions:
$H(Z|A)=0$, $H(Z|B)=0$, and $H(Z) = I(A;B)$.  In other words,
$Z$ encapsulates the mutual information of $A$ and $B$.  In general,
two random variables $A$ and $B$ might not have a common information.
But in the context of vector spaces (or the random variables
coming from them), common informations always exist; if $A$ and
$B$ are subspaces of a vector space, one can just let $Z$ be
the intersection of $A$ and $B$, and $Z$ will have the desired
properties.

Hammer et al.~\cite{Hammer-Romashchenko-Shen-Vereshchagin97b}
showed that the Ingleton inequality (and its permuted-variable
forms) and the Shannon inequalities fully characterize the
cone of linearly
representable entropy vectors on four random variables (i.e.,
there are no more linear rank inequalities to be found on
four variables).

\section{New five-variable inequalities}

We will answer Ingleton's question here. Using the existence of
common informations, one can prove the following twenty-four new
linear rank inequalities on five variables (this is a complete and
irreducible list, as will be explained below).
\newcommand{\Gap}{\ \ \ \ }
\begin{align}
% Inequality # 1:
     I(A;B) &\le I(A;B|C)+I(A;B|D)+I(C;D|E)+I(A;E) \\
% Inequality # 2:
     I(A;B) &\le I(A;B|C)+I(A;C|D)+I(A;D|E)+I(B;E) \\
% Inequality # 3:
     I(A;B) &\le I(A;C)+I(A;B|D)+I(B;E|C)+I(A;D|\Pair{C}{E}) \\
% Inequality # 4:
     I(A;B) &\le I(A;C)+I(A;B|\Pair{D}{E})+I(B;D|C)+I(A;E|\Pair{C}{D}) \\
% Inequality # 5:
     I(A;B) &\le I(A;C)+I(B;D|C)+I(A;E|D) \notag\\
                 &\Gap +I(A;B|\Pair{C}{E})+I(B;C|\Pair{D}{E}) \\
% Inequality # 6:
     I(A;B) &\le I(A;C)+I(B;D|E)+I(D;E|C) \notag\\
                 &\Gap +I(A;B|\Pair{C}{D})+I(A;C|\Pair{D}{E}) \\
% Inequality # 7:
     I(A;B) &\le I(A;C|D)+I(A;E|C)+I(B;D) \notag\\
                 &\Gap +I(B;D|\Pair{C}{E})+I(A;B|\Pair{D}{E}) \\
% Inequality # 8:
     2I(A;B) &\le I(A;B|C)+I(A;B|D)+I(A;B|E) \notag\\
                 &\Gap +I(C;D)+I(\Pair{C}{D};E) \\
% Inequality # 9:
     2I(A;B) &\le I(A;C)+I(A;B|D)+I(A;B|E) \notag\\
                 &\Gap +I(D;E)+I(B;\Pair{D}{E}|C)
\end{align}
\begin{align}
% Inequality # 10:
     2I(A;B) &\le I(A;B|C)+I(A;B|D)+I(C;D)+I(A;E) \notag\\
                  &\Gap +I(B;D|E)+ I(A;C|\Pair{D}{E}) \\
% Inequality # 11:
     I(A;\Pair{B}{C}) &\le I(A;C|\Pair{B}{D})+I(A;\Pair{C}{E})+I(A;B|\Pair{D}{E})+I(B;D|\Pair{C}{E}) \\
% Inequality # 12:
     I(A;\Pair{B}{C}) &\le I(A;C)+I(A;B|D)+I(A;D|E)+I(B;E|C) \notag\\
                  &\Gap +I(A;C|\Pair{B}{E})+ I(C;E|\Pair{B}{D}) \\
% Inequality # 13:
     I(A;\Pair{B}{C}) &\le I(A;B|D)+I(A;\Pair{C}{E})+I(B;D|\Pair{C}{E}) \notag\\
                  &\Gap +I(A;C|\Pair{B}{E})+I(C;E|\Pair{B}{D}) \\
% Inequality # 14:
     I(A;\Pair{B}{C}) &\le I(A;D)+I(B;E|D)+I(A;B|\Pair{C}{E}) \notag\\
                  &\Gap +I(A;C|\Pair{B}{D})+I(A;C|\Pair{D}{E}) \\
% Inequality # 15:
     I(A;\Pair{B}{C}) &\le I(A;D)+I(B;E|D)+I(A;C|E)+I(A;B|\Pair{C}{D}) \notag\\
                  &\Gap +I(A;C|\Pair{B}{D})+ I(B;D|\Pair{C}{E}) \\
% Inequality # 16:
     I(A;\Pair{B}{C}) &\le I(A;B|\Pair{C}{D})+I(A;C|\Pair{B}{D})+I(\Pair{B}{C};D|E) \notag\\
                  &\Gap +I(B;C|\Pair{D}{E})+I(A;E) \\
% Inequality # 17:
     I(\Pair{A}{B};\Pair{C}{D}) &\le I(\Pair{A}{B};D)+I(A;D|\Pair{B}{C})+I(B;D|\Pair{A}{C})+I(A;C|\Pair{B}{E}) \notag\\
                  &\Gap +I(B;C|\Pair{A}{E})+ I(A;B|\Pair{D}{E})+I(C;E|D) \\
% Inequality # 18:
     I(A;B)+I(A;C) &\le I(B;C)+I(A;B|D)+I(A;C|D)+I(B;D|E) \notag\\
                   &\Gap +I(C;D|E)+ I(A;E) \\
% Inequality # 19:
     I(A;B)+I(A;C) &\le I(B;D)+2I(A;C|D)+I(A;B|E)+I(D;E) \notag\\
                   &\Gap +I(B;E|\Pair{C}{D})+ I(C;D|\Pair{B}{E}) \\
% Inequality # 20:
     I(A;B)+I(A;C) &\le I(B;C)+I(B;D)+I(A;C|D)+I(A;B|E) \notag\\
                        &\Gap +I(A;E|B)+ I(C;D|E)+I(B;E|\Pair{C}{D}) \\
% Inequality # 21:
     I(A;B)+I(A;C) &\le I(B;D)+I(A;C|D)+I(A;D|E)+I(C;E) \notag\\
                         &\Gap +I(A;B|\Pair{C}{E})+ I(B;C|\Pair{D}{E})+I(B;E|\Pair{C}{D}) \\
% Inequality # 22:
     2I(A;B)+I(A;C) &\le I(A;B|C)+I(A;B|D)+I(C;D)+I(A;C|E) \notag\\
                         &\Gap +I(A;D|E)+2I(B;E)+I(B;C|\Pair{D}{E})+I(C;E|\Pair{B}{D}) \\
% Inequality # 23:
     I(A;B)+I(A;\Pair{B}{C}) &\le I(A;B|D)+2I(A;C|E)+I(B;E)+I(D;E) \notag\\
                         &\Gap +I(A;B|\Pair{C}{D})+2I(B;D|\Pair{C}{E})+I(C;E|\Pair{B}{D}) \\
% Inequality # 24:
     I(A;\Pair{C}{D})+I(B;\Pair{C}{D}) &\le I(B;D)+I(B;C|E)+I(C;E|D)+I(A;E)+I(A;C|\Pair{B}{D}) \notag\\
                          &\Gap +I(\Pair{A}{B};D|C)+I(A;D|\Pair{B}{E})+I(A;B|\Pair{D}{E})
\end{align}

(Note that there is much more variety of form in these inequalities
than there is in the four-variable
non-Shannon-type inequalities from~\cite{Dougherty-Freiling-Zeger-nonShannon}.)

Each of these inequalities is provable from the Shannon inequalities if we
assume that each mutual information on the left-hand side of the inequality
is in fact realized by a common information.  (Hence, since such common informations
always exist in the linear case, the inequalities are all linear rank
inequalities.)  For instance,
inequalities (1)--(10) all hold if we assume that there is a random variable $Z$
such that $H(Z|A)=H(Z|B)=0$ and $H(Z)=I(A;B)$; inequality (23) holds if there
exist random variables $Z$ and $Y$ such that $H(Z|A)=H(Z|B)=H(Y|A)=H(Y|\Pair{B}{C})=0$,
$H(Z)=I(A;B)$, and $H(Y)=I(A;\Pair{B}{C})$; and so on.  These assertions can all be
verified using the program {\tt ITIP}~\cite{ITIP}.
In fact, all of these become Shannon inequalities if we
replace the left-hand mutual information(s) with terms $H(Z)$ or $H(Y)$ and
add to the right-hand side appropriate terms like $kH(Z|A)+kH(Z|B)$ for a
sufficiently large coefficient $k$ ($k=5$ suffices for all of these
inequalities).  For example, for inequality (1), one can show that
\begin{align*}
    H(Z) &\le I(A;B|C)+I(A;B|D)+I(C;D|E)+I(A;E)+5H(Z|A)+5H(Z|B)
\end{align*}
is a Shannon inequality; if we set $Z$ to be a common information
for $A$ and $B$, we get inequality~(1).  Again the verifications of
these Shannon inequalities can be performed using {\tt ITIP}, or one can
work them out explicitly.
% The method above can be used to prove each of the inequalities.
In Section~\ref{proofs} we will present
various alternate proof techniques.

These inequalities can be written in other equivalent forms.

Obvious rewrites (move the first term on the right to the left):
\begin{align}
% Inequality 12a:
  I(A;B|C) &\le I(A;B|D)+I(A;D|E)+I(B;E|C) \notag\\
                     &\Gap +I(A;C|\Pair{B}{E})+I(C;E|\Pair{B}{D}) \tag{12a}\\
% Inequality 17a:
  I(\Pair{A}{B};C|D) &\le I(A;D|\Pair{B}{C})+I(B;D|\Pair{A}{C})+I(A;C|\Pair{B}{E})\notag\\
                     &\Gap +I(B;C|\Pair{A}{E})+I(A;B|\Pair{D}{E})+I(C;E|D) \tag{17a}\\
% Inequality 24a:
  I(A;\Pair{C}{D})+I(B;C|D) &\le I(B;C|E)+I(C;E|D)+I(A;E)+I(A;C|\Pair{B}{D}) \notag\\
                            &\Gap +I(\Pair{A}{B};D|C)+I(A;D|\Pair{B}{E})+I(A;B|\Pair{D}{E}) \tag{24a}
\end{align}

Obvious rewrites (enlarge terms on the left so they can be combined):
\begin{align}
% Inequality # 18b:
     2I(A;\Pair{B}{C}) &\le I(A;C|B)+I(A;B|C)+I(B;C)+I(A;B|D)+I(A;C|D) \notag\\
                   &\Gap +I(B;D|E)+I(C;D|E)+ I(A;E) \tag{18b} \\
% Inequality # 19b:
     2I(A;\Pair{B}{C}) &\le I(A;C|B)+I(A;B|C)+I(B;D)+2I(A;C|D)+I(A;B|E) \notag\\
                   &\Gap +I(D;E)+I(B;E|\Pair{C}{D})+ I(C;D|\Pair{B}{E}) \tag{19b} \\
% Inequality # 20b:
     2I(A;\Pair{B}{C}) &\le I(A;C|B)+I(A;B|C)+I(B;C)+I(B;D)+I(A;C|D) \notag\\
                        &\Gap +I(A;B|E)+I(A;E|B)+ I(C;D|E)+I(B;E|\Pair{C}{D}) \tag{20b} \\
% Inequality # 21b:
     2I(A;\Pair{B}{C}) &\le I(A;C|B)+I(A;B|C)+I(B;D)+I(A;C|D)+I(A;D|E) \notag\\
                         &\Gap +I(C;E)+I(A;B|\Pair{C}{E})+ I(B;C|\Pair{D}{E})+I(B;E|\Pair{C}{D}) \tag{21b} \\
% Inequality # 22b:
     3I(A;\Pair{B}{C}) &\le 2I(A;C|B)+2I(A;B|C)+I(A;B|D)+I(C;D)+I(A;C|E) \notag\\
                         &\Gap +I(A;D|E)+2I(B;E)+I(B;C|\Pair{D}{E})+I(C;E|\Pair{B}{D}) \tag{22b} \\
% Inequality # 23b:
     2I(A;\Pair{B}{C}) &\le I(A;C|B)+I(A;B|D)+2I(A;C|E)+I(B;E)+I(D;E) \notag\\
                         &\Gap +I(A;B|\Pair{C}{D})+2I(B;D|\Pair{C}{E})+I(C;E|\Pair{B}{D}) \tag{23b} \\
% Inequality # 24b:
     2I(\Pair{A}{B};\Pair{C}{D}) &\le I(B;\Pair{C}{D}|A)+I(A;\Pair{C}{D}|B)+I(B;D)+I(B;C|E) \notag\\
                          &\Gap +I(C;E|D)+I(A;E)+ I(A;C|\Pair{B}{D})+I(\Pair{A}{B};D|C) \notag\\
                          &\Gap +I(A;D|\Pair{B}{E})+I(A;B|\Pair{D}{E}) \tag{24b}
\end{align}

Non-obvious rewrites:
\begin{align}
% Inequality 1c:
   I(A;C) &\le I(A;C|B)+I(A;B|D)+I(C;D|E)+I(A;E) \tag{1c}\\
% Inequality 11c:
  I(A;B|C) &\le I(A;E|C)+I(A;C|\Pair{B}{D})+I(A;B|\Pair{D}{E})+I(B;D|\Pair{C}{E})\tag{11c}\\
% Inequality 13c:
  I(A;B|C) &\le I(A;B|D)+I(A;E|C)+I(B;D|\Pair{C}{E}) \notag\\
                     &\Gap +I(A;C|\Pair{B}{E})+I(C;E|\Pair{B}{D})\tag{13c}\\
% Inequality 15c:
  I(B;C|D) &\le I(B;C|\Pair{A}{D})+I(A;D|\Pair{B}{C})+I(B;E|D) \notag\\
                     &\Gap +I(A;C|E)+I(B;D|\Pair{C}{E})\tag{15c}\\
% Inequality 19c:
  I(B;C) &\le I(B;D)+I(A;C|D)+I(C;D|A) \notag\\
                   &\Gap +I(B;E|A)+I(B;C|\Pair{D}{E})+ I(D;E|\Pair{B}{C}) \tag{19c}\\
% Inequality 21c:
  I(C;D|E) &\le I(A;D|E)+I(C;D|A)+I(B;D|\Pair{C}{E}) \notag\\
                   &\Gap +I(B;\Pair{C}{E}|A)+I(C;E|\Pair{B}{D})\tag{21c}\\
% Inequality 22c:
  2I(A;\Pair{C}{D}) &\le I(A;D|C)+I(C;D|A)+I(A;C|B) \notag\\
                      &\Gap +I(A;D|B)+I(A;C|E)+I(A;D|E) \notag\\
                      &\Gap +2I(B;E)+I(B;C|\Pair{D}{E})+I(C;E|\Pair{B}{D}) \tag{22c}\\
% Inequality 23c:
  I(B;D|E) &\le I(B;D|A)+I(A;C|E)+I(C;E|A)+I(B;D|\Pair{A}{C}) \notag\\
                    &\Gap +I(D;E|\Pair{B}{C})+I(B;E|\Pair{C}{D})+I(B;D|\Pair{C}{E})\tag{23c}\\
% Inequality 24c:
  I(\Pair{A}{E};D) &\le I(B;D)+I(C;E|B)+I(D;E|C)+I(A;B|\Pair{C}{D}) \notag\\
                     &\Gap +I(A;D|\Pair{B}{C})+ I(A;D|\Pair{B}{E})+I(A;E|\Pair{B}{D})\tag{24c}
\end{align}

Note that, for these variant forms, we do \textit{not} make the claim
that the inequality follows from the existence of common informations
corresponding to the left-hand-side terms.  For instance, inequality~(19c)
does not follow from the Shannon inequalities and the existence of
a common information for $B$ and~$C$.  It turns out that inequality~(24b)
is provable from existence of a common information for $(A,B)$ and $(C,D)$, and
inequalities (19b), (21b), (22b), and~(23b) are provable from existence
of a common information for $A$ and~$(B,C)$, but inequalities (18b) and~(20b)
are not; in fact, no single common information (together with the
Shannon inequalities) suffices to prove (18) or (20).

\section{Alternate proofs and generalizations}\label{proofs}

In this section we will provide some alternate proof techniques for
the inequalities.  This will lead to natural generalizations.

\begin{lemma}
\label{lem:RtoZ}
The inequality
$H(Z|R)+I(R;S|T) \ge I(Z;S|T)$
is a Shannon inequality.
\end{lemma}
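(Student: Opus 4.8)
The plan is to verify the claimed inequality
$H(Z|R)+I(R;S|T)\ge I(Z;S|T)$
directly, by expanding every term into a nonnegative combination of basic Shannon inequalities (conditional mutual informations and conditional entropies). Since the statement asserts this is a Shannon inequality, there is no need for common informations or any vector-space structure here; it is a purely syntactic fact about the functions $H$, $I$ on arbitrary random variables, and the work is just bookkeeping.

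First I would rewrite the left-hand side using the standard identity $I(R;S|T)=I(R;S,T)-I(R;T)$-type manipulations, or more directly observe that $H(Z|R)\ge H(Z|R,S,T)$ because the difference $H(Z|R)-H(Z|R,S,T)=I(Z;S,T|R)\ge 0$ is a (conditional) Shannon term. This lets me replace $H(Z|R)$ on the left by the smaller quantity $H(Z|R,S,T)$, so it suffices to prove
$H(Z|R,S,T)+I(R;S|T)\ge I(Z;S|T).$
Next I would use the chain-rule decomposition $I(R,Z;S|T)=I(R;S|T)+I(Z;S|R,T)$. Since $I(Z;S|R,T)\ge 0$, we get $I(R;S|T)\le I(R,Z;S|T)$; but I actually want a lower bound on the left side, so instead I would go the other way: combine $I(R;S|T)=I(R,Z;S|T)-I(Z;S|R,T)$ with the target. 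The cleanest route is to show that the difference
$H(Z|R)+I(R;S|T)-I(Z;S|T)$
equals a single conditional mutual information plus conditional entropy that is manifestly nonnegative — I expect it to come out to something like $I(Z;R|S,T)+H(Z|R,S,T)$ or a similar sum, after expanding all five terms into the basic $H$ atoms over the variables $Z,R,S,T$ and collecting.

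Concretely, the key algebraic step is: expand $H(Z|R)=H(Z,R)-H(R)$, $I(R;S|T)=H(R,T)+H(S,T)-H(R,S,T)-H(T)$, and $I(Z;S|T)=H(Z,T)+H(S,T)-H(Z,S,T)-H(T)$, subtract, and simplify. The $H(S,T)$ and $H(T)$ terms cancel, leaving
$H(Z,R)-H(R)+H(R,T)-H(R,S,T)-H(Z,T)+H(Z,S,T).$
I would then recognize this as $[H(Z,R)-H(R)-H(Z,R,T)+H(R,T)] + [H(Z,R,T)-H(R,S,T)-H(Z,T)+H(Z,S,T)]$; the first bracket is $H(Z|R)-H(Z|R,T)=I(Z;T|R)\ge 0$, and the second bracket, after adding and subtracting $H(R,S,T)$ appropriately, should reduce to $I(Z;R|S,T)\ge 0$ plus possibly a leftover $H(Z|R,S,T)\ge 0$ — one regroups so that everything is a sum of basic Shannon quantities.

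The main obstacle is purely clerical: getting the grouping exactly right so that no negative coefficients remain. There is no conceptual difficulty — the inequality is weak enough that it follows from monotonicity ($H(Z|R)\ge H(Z|R,S,T)$) together with a single application of the chain rule for mutual information — so once the atoms are written out and matched, the identity
$H(Z|R)+I(R;S|T)-I(Z;S|T)=I(Z;T|R)+I(Z;R|S,T)+H(Z|R,S,T)$
(or the correct variant thereof) can be checked by direct substitution, and each summand on the right is a basic Shannon inequality, completing the proof.
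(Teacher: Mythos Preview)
Your approach is correct and essentially equivalent in spirit to the paper's: both are direct Shannon manipulations with no extra ingredients. The paper organizes the computation as a short chain of monotonicity steps (working with $H(S|\Pair{Z}{T})$ and showing $H(Z|R)+H(S|\Pair{Z}{T})\ge H(S|\Pair{R}{T})$, then rearranging), whereas you expand to $H$-atoms and regroup. Your route works, but the specific identity you guessed has one term off: the correct decomposition is
\[
H(Z|R)+I(R;S|T)-I(Z;S|T)=I(Z;T|R)+I(R;S|\Pair{Z}{T})+H(Z|\Triple{R}{S}{T}),
\]
with $I(R;S|\Pair{Z}{T})$ in place of your $I(Z;R|\Pair{S}{T})$; as you anticipated, direct substitution reveals and fixes this. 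Note the $H(Z|\Triple{R}{S}{T})$ term you isolated is exactly the extra slack the paper later exploits in Lemma~\ref{lem:RtoZplus}, so your atom-level bookkeeping in fact previews that sharpening.
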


\begin{proof}
Using Shannon inequalities, we see that
\begin{align*}
 H(Z|R)+H(S|\Pair{Z}{T})
      &\ge H(Z|\Pair{R}{T})+H(S|\Pair{Z}{T})\\
      &\ge I(S;Z|\Pair{R}{T})+H(S|\Pair{Z}{T})\\
      &\ge I(S;Z|\Pair{R}{T})+H(S|\Triple{R}{Z}{T})\\
      &= H(S|\Pair{R}{T}).
\end{align*}
So $H(Z|R)-H(S|\Pair{R}{T}) \ge -H(S|\Pair{Z}{T})$; add $H(S|T)$ to both sides to
get the desired result.
\end{proof}

\begin{corollary}
If $H(Z|R)=0$, then $I(R;S|T)\ge I(Z;S|T)$.  \\
%Similarly, if $H(Z|S)=0$, then $I(R;S|T)\ge I(R;Z|T)$
\label{cor:1}
\end{corollary}

\begin{proof}[Proof of the Ingleton inequality]
Let $Z$ be a common information of $A$ and $B$, so that
$H(Z|A)=H(Z|B)=0$ and $H(Z)=I(A;B)$.  Then
\begin{align*}
&I(A;B|C)+I(A;B|D)+I(C;D)\\
        &\ge I(Z;B|C)+I(Z;B|D)+I(C;D)  &\Comment{Corollary \ref{cor:1} using $H(Z|A)=0$}\\
      &\ge I(Z;Z|C)+I(Z;Z|D)+I(C;D)  &\Comment{Corollary \ref{cor:1} using $H(Z|B)=0$}\\
       &= H(Z|C)+H(Z|D)+I(C;D)\\
      &\ge H(Z|C) + I(Z;C)                   &\Comment{Lemma \ref{lem:RtoZ}}\\
      &\ge I(Z;Z)                    &\Comment{Lemma \ref{lem:RtoZ}}\\
       &= H(Z)\\
       &= I(A;B).
\end{align*}
\end{proof}

This is essentially the proof given in
Hammer et~al.~\cite{Hammer-Romashchenko-Shen-Vereshchagin97b}.

\begin{proof}[Proof of inequality (1)]
Let $Z$ be a common information of $A$ and $B$; then
\begin{align*}
&         I(A;B|C)+I(A;B|D)+I(C;D|E)+I(A;E)\\
      &\ge I(Z;Z|C)+I(Z;Z|D)+I(C;D|E)+I(Z;E)     &\Comment{Corollary \ref{cor:1} five times}\\
       &= H(Z|C)+H(Z|D)+I(C;D|E)+I(Z;E) \\
      &\ge I(Z;Z|E)+I(Z;E)                       &\Comment{Lemma \ref{lem:RtoZ} twice}\\
       &= H(Z|E)+I(Z;E) \\
       &= H(Z) \\
       &= I(A;B).
\end{align*}
\end{proof}

\begin{proof}[Proof of inequality (2)]
Let $Z$ be a common information of $A$ and $B$; then
\begin{align*}
&         I(A;B|C)+I(A;C|D)+I(A;D|E)+I(B;E) \\
      &\ge I(Z;Z|C)+I(Z;C|D)+I(Z;D|E)+I(Z;E)     &\Comment{Corollary \ref{cor:1}}\\
       &= H(Z|C)+I(Z;C|D)+I(Z;D|E)+I(Z;E) \\
      &\ge I(Z;Z|D)+I(Z;D|E)+I(Z;E)              &\Comment{Lemma \ref{lem:RtoZ}}\\
       &= H(Z|D)+I(Z;D|E)+I(Z;E)\\
      &\ge I(Z;Z|E)+I(Z;E)                       &\Comment{Lemma \ref{lem:RtoZ}}\\
       &= H(Z|E)+I(Z;E) \\
       &= H(Z)\\
       &= I(A;B).
\end{align*}
\end{proof}

The same pattern allows us to prove more general inequalities:
if $A_0$ and $B_0$ have a common information, then:
\begin{align}
      I(A_0;B_0) &\le I(A_0;B_0|B_1) \notag\\
                     &\Gap + I(A_0;B_1|B_2) \notag\\
                     &\Gap + \dotsb  \notag\\
                     &\Gap + I(A_0;B_{n-1}|B_n) \notag\\
                     &\Gap + I(B_0;B_n) \label{eq:starone} \\
      I(A_0;B_0) &\le 2^{n-1}I(A_0;B_0|A_1) + 2^{n-1}I(A_0;B_0|B_1)  \notag\\
               &\Gap + 2^{n-2}I(A_1;B_1|A_2) + 2^{n-2}I(A_1,;B_1|B_2) \notag\\
               &\Gap + \dotsb \notag\\
               &\Gap + I(A_{n-1};B_{n-1}|A_n) + I(A_{n-1};B_{n-1}|B_n) \notag\\
               &\Gap + I(A_n;B_n) \label{eq:startwo}
\end{align}
(Note that \eqref{eq:startwo} is related to results in Makarychev and Makarychev
\cite{Makarychev-Makarychev}.)
These can be generalized further; for instance, in the right hand side
of \eqref{eq:starone} any number of $A_0$'s may be replaced by $B_0$'s and/or
vice versa.

In fact:

\begin{theorem}
\label{thm:tree1} Suppose we have a finite binary tree where the
root is labeled with an information term $I(x;y)$ and each other
node is labeled with a term $I(x;y|z)$.
These terms may involve any variables.  We single
out two variables or combinations of variables, called $A$ and $B$.
Suppose that, for each node of the tree,
if its label is $I(x;y|z)$ [we allow $z$ to be empty at the root],
then:

$(a)$ $x$ is $A$ or $B$ and there is no left child, or

$(b)$ there is a left child and it is labeled $I(r;s|x)$ for some
$r$ and $s$;
\newline and

$(a')$ $y$ is $A$ or $B$ and there is no right child, or

$(b')$ there is a right child and it is labeled $I(r';s'|y)$ for
some $r',s'$.

Then the inequality
\begin{align}
    I(A;B) &\le \mbox{sum of all the node labels in the tree}
\end{align}
is a linear rank inequality (in fact, it is
true whenever $A$ and $B$ have a common information).
\end{theorem}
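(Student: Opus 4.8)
The plan is to fix, for the singled-out $A$ and $B$ (single variables or joint variables), a common information $Z$, i.e. a variable (or subspace) with $H(Z|A)=H(Z|B)=0$ and $H(Z)=I(A;B)$; since common informations always exist in the linear setting, it suffices to prove $I(A;B)\le(\text{sum of all node labels})$ under these three hypotheses on $Z$. Everything then follows from Lemma~\ref{lem:RtoZ} (with Corollary~\ref{cor:1} as the special case in which the $H(Z|\cdot)$ term is known to vanish), applied along the tree from the leaves up to the root.

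The crux is to choose an inductive invariant phrased in terms of the \emph{conditioning} argument of each label rather than in terms of $A$ and $B$ directly. For a node $v$ with label $I(x;y|z)$ --- with $z$ the empty conditioning at the root --- let $S(v)$ denote the sum of the labels in the subtree rooted at $v$; I would prove by induction on the subtree that
\[
   S(v) \ \ge\ I(Z;Z|z) \ =\ H(Z|z).
\]
Applying this at the root gives $(\text{sum of all labels})\ge H(Z)=I(A;B)$, as desired. For the inductive step, observe that $(\text{sum of left subtree})\ge H(Z|x)$ always holds: if $v$ has a left child it is labeled $I(r;s|x)$ by $(b)$, which is exactly the inductive hypothesis for that child; if $v$ has no left child then $x\in\{A,B\}$ by $(a)$, so $H(Z|x)=0$ while the left subtree is empty. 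Symmetrically $(\text{sum of right subtree})\ge H(Z|y)$. Hence $S(v)\ge I(x;y|z)+H(Z|x)+H(Z|y)$, and two applications of Lemma~\ref{lem:RtoZ} finish the job: first with $(R,S,T)=(x,y,z)$, giving $H(Z|x)+I(x;y|z)\ge I(Z;y|z)=I(y;Z|z)$, then with $(R,S,T)=(y,Z,z)$, giving $H(Z|y)+I(y;Z|z)\ge I(Z;Z|z)=H(Z|z)$. The same computation covers the leaves with no separate case analysis, since there both subtree sums are $0=H(Z|x)=H(Z|y)$.

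I do not foresee a genuine obstacle; the delicate points are organizational. The first is arranging the invariant to propagate across an edge --- the conditioning variable of a left child is precisely the first argument $x$ of its parent's label $I(x;y|z)$, which is exactly what lets the parent term combine with the child's bound $H(Z|x)$ via Lemma~\ref{lem:RtoZ}; the $(b)$/$(b')$ conditions are engineered so this matching always occurs, and $(a)$/$(a')$ terminate the recursion with $H(Z|\cdot)=0$. The second is the empty-$z$ convention at the root, which is harmless because Lemma~\ref{lem:RtoZ} and Corollary~\ref{cor:1} remain valid Shannon statements when $T$ is empty. Finally, summing the node-level inequalities produced above exhibits the target inequality as a Shannon inequality in the original variables together with $H(Z|A)$ and $H(Z|B)$; as in the five-variable examples, one may if desired add a fixed multiple $k\bigl(H(Z|A)+H(Z|B)\bigr)$ to the right-hand side to obtain an honest Shannon inequality and then specialize $Z$ to a common information.
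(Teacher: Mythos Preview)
Your proposal is correct and follows essentially the same approach as the paper: an induction from the leaves to the root with invariant $H(Z|z)\le S(v)$ (plus correction terms), the inductive step being two applications of Lemma~\ref{lem:RtoZ}. The only cosmetic difference is that the paper keeps $Z$ generic and carries along explicit coefficients $j_n H(Z|A)+k_n H(Z|B)$ on the right-hand side throughout, specializing to a common information only at the end, whereas you set $H(Z|A)=H(Z|B)=0$ up front; you note this equivalence yourself in your final paragraph.
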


\begin{proof}
Let $Z$ be a new variable.
We prove by induction in the tree (from the leaves toward the root)
that, for each node $n$, if $T_n$ is the subtree rooted at $n$, and the node
label at $n$ is $I(r;s|t)$, then we have as a Shannon inequality
\begin{align}
    H(Z|t) &\le \mbox{sum of node labels in $T_n$}
              + j_n H(Z|A) + k_n H(Z|B)
\end{align}
for some $j_n,k_n \ge 0$.  (The inductive step uses Lemma~\ref{lem:RtoZ}.)
Applying this when $n$ is the root and $Z$ is a common information of $A$ and $B$
gives the desired result.
\end{proof}

We get the Ingleton inequality and inequalities (1) and (2) by applying this
to the trees:

\begin{center}
\begin{picture}(6.5,1)(-0.25,-0.25)
\thicklines
\put(1,0.5){\makebox(0,0)[r]{Ingleton:}}
\put(3,0.5){\makebox(0,0){$I(C;D)$}}
\put(2.625,0.125){\line(1,2){0.125}}
\put(3.375,0.125){\line(-1,2){0.125}}
\put(2.375,0){\makebox(0,0){$I(A;B|C)$}}
\put(3.625,0){\makebox(0,0){$I(A;B|D)$}}
\end{picture}
\end{center}

\begin{center}
\begin{picture}(6.5,1.5)(0.375,-0.25)
\thicklines
\put(1.625,1){\makebox(0,0)[r]{(1):}}
\put(3,1){\makebox(0,0){$I(A;E)$}}
\put(3.375,0.625){\line(-1,2){0.125}}
\put(3.625,0.5){\makebox(0,0){$I(C;D|E)$}}
\put(3.25,0.125){\line(1,2){0.125}}
\put(4,0.125){\line(-1,2){0.125}}
\put(3,0){\makebox(0,0){$I(A;B|C)$}}
\put(4.25,0){\makebox(0,0){$I(A;B|D)$}}
\end{picture}
\end{center}

\begin{center}
\begin{picture}(6.5,2)(0.6875,-0.25)
\thicklines
\put(1.9375,1.5){\makebox(0,0)[r]{(2):}}
\put(3,1.5){\makebox(0,0){$I(B;E)$}}
\put(3.375,1.125){\line(-1,2){0.125}}
\put(3.625,1){\makebox(0,0){$I(A;D|E)$}}
\put(4,0.625){\line(-1,2){0.125}}
\put(4.25,0.5){\makebox(0,0){$I(A;C|D)$}}
\put(4.625,0.125){\line(-1,2){0.125}}
\put(4.875,0){\makebox(0,0){$I(A;B|C)$}}
\end{picture}
\end{center}

A longer "linear" tree like the last one gives \eqref{eq:starone}, while a complete
binary tree of height $n$ gives \eqref{eq:startwo}.

Here is another version of Theorem~\ref{thm:tree1}:

\begin{theorem}
\label{thm:list1}
% Let $A,B,r_1,r_2, \dots, r_k$ be random variables.
Let $I(x_1;y_1|w_1)$, $I(x_2;y_2|w_2)$, $\dotsc$, $I(x_m;y_m|w_m)$ be a
list of information terms, where each $x_i,y_i,w_i$ is chosen from the list
$A,B,r_1,r_2,\dots,r_k$ with the exception that $w_1$ is empty
(i.e., the first information term is just $I(x_1;y_1)$).
Suppose that each of the variables $r_j$ is used exactly twice,
once as a $w_i$ and once as an $x_i$ or $y_i$; while variables $A$ and $B$
may be used as many times as desired as an $x_i$ or $y_i$, but are not
used as a $w_i$.  Then the inequality
    $$I(A;B) \le \sum_{i=1}^m I(x_i;y_i|w_i)$$
is a linear rank inequality (in fact, it is
true whenever $A$ and $B$ have a common information).
\end{theorem}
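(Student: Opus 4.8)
The plan is to reduce Theorem~\ref{thm:list1} to Theorem~\ref{thm:tree1} by reading the list off as a binary-forest structure and keeping only the tree rooted at the first term. First I would build a parent assignment on the index set $\{1,\dots,m\}$: by hypothesis $w_1$ is empty and each $w_i$ with $i\ge 2$ is one of the $r_j$'s, and each such $r_j$ occurs exactly once among all the $x$'s and $y$'s, say as $x_t$ or $y_t$ for a unique index $t$; I declare $t$ to be the parent of $i$, with $i$ designated the \emph{left} child of $t$ if $w_i=x_t$ and the \emph{right} child of $t$ if $w_i=y_t$. Because each $r_j$ is used exactly once as some $w_i$, a given node $t$ has at most one left child (the term conditioned on $x_t$, present when $x_t$ is an $r$-variable) and at most one right child (the term conditioned on $y_t$, present when $y_t$ is an $r$-variable), so this really is a binary-forest structure; and since $w_1$ is empty, node $1$ has no parent.

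Next I would isolate the subtree $T$ consisting of node $1$ together with all nodes whose chain of parents leads back to node $1$. This $T$ is a genuine binary tree rooted at node $1$ (a cycle is impossible because node $1$, having no parent, cannot lie on one), and I claim it satisfies the hypotheses of Theorem~\ref{thm:tree1} with the singled-out variables $A$ and $B$. Indeed, take a node $t\in T$ with label $I(x_t;y_t\mid w_t)$. If $x_t\in\{A,B\}$ then, since $A$ and $B$ are never used as a conditioning variable, no term is conditioned on $x_t$, so $t$ has no left child and clause~$(a)$ holds; if instead $x_t$ is some $r_j$, then the unique term conditioned on $r_j$ is by construction the left child of $t$, hence lies in $T$, and its label has the form $I(r;s\mid x_t)$, so clause~$(b)$ holds. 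The argument for $y_t$, the right child, and clauses $(a')$ and $(b')$ is identical, and at the root we are permitted an empty conditioning variable. Theorem~\ref{thm:tree1} then yields $I(A;B)\le \sum_{t\in T} I(x_t;y_t\mid w_t)$, and since every information term is nonnegative by the basic Shannon inequalities, adding back the labels of the nodes outside $T$ gives $I(A;B)\le\sum_{i=1}^m I(x_i;y_i\mid w_i)$, as desired; the parenthetical about common informations is inherited verbatim from Theorem~\ref{thm:tree1}.

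The one point that needs care — and the only place where the reduction is not completely automatic — is that the list is \emph{not} assumed to be ``connected'': the parent structure may split off extra components (necessarily containing cycles, since off the root every node has exactly one parent), as already happens for a list such as $I(A;B),\,I(r_1;A\mid r_2),\,I(r_2;B\mid r_1)$. The resolution is exactly the discarding step above: the surplus terms only enlarge the right-hand side, so it suffices to apply Theorem~\ref{thm:tree1} to the component of the first term. A secondary bookkeeping point is to verify that no node gets assigned two parents or two children on the same side, which is precisely where the hypothesis that each $r_j$ is used exactly twice (once as a $w_i$ and once as an $x_i$ or $y_i$) is invoked, once in each direction.
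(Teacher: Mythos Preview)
Your proposal is correct and takes essentially the same approach as the paper's own proof: both build a tree rooted at the first term by matching each conditioning variable $w_i$ to its unique occurrence as an $x_t$ or $y_t$, verify the hypotheses of Theorem~\ref{thm:tree1} on that tree, and then absorb any leftover terms using nonnegativity. The only cosmetic difference is direction---the paper grows the tree top-down by assigning children, while you describe the same structure bottom-up via a parent map and then restrict to the component of node~$1$---and you are somewhat more explicit than the paper about the possibility of disconnected (cyclic) leftover components.
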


\begin{proof}
We build a tree for use in
Theorem~\ref{thm:tree1}.  Each node will be labeled with one of the terms
$I(x_i;y_i|w_i)$.  The root is labeled $I(x_1;y_1)$.  If we have a
node $I(x_i;y_i|w_i)$ where $x_i$ is not $A$ or $B$, then create a
left child for this node and label it $I(x_j;y_j|w_j)$ for the
unique $j$ such that $w_j=x_i$.  Similarly, if $y_i$ is not $A$ or $B$,
then create a right child for this node and label it
$I(x_j;y_j|w_j)$ for the unique $j$ such that $w_j=y_i$.  It is easy to
show that no term $I(x_i;y_i|w_i)$ will be used more than once in
this construction (look for the counterexample nearest the root).
Hence, the constuction will terminate, and the sum of the labels
used is less than or equal to $\sum_{i=1}^m I(x_i;y_i|w_i)$ (it
does not matter if some of the terms $I(x_i;y_i|w_i)$ are not used
as labels).  Now Theorem~\ref{thm:tree1} gives the desired result.
\end{proof}

Theorem~\ref{thm:list1} directly gives the Ingleton inequality and
inequalities (1) and (2).  It also gives a number
of the other listed inequalities once we write them in an equivalent form
using equations such as $I(A;B|C)=I(A;\Pair{B}{C}|C)$:
\begin{align}
% Inequality 3d:
    I(A;B) &\le I(A;C)+I(A;B|D)+I(B;\Pair{C}{E}|C)+I(A;D|\Pair{C}{E})\tag{3d}\\
% Inequality 4d:
    I(A;B) &\le I(A;C)+I(A;B|\Pair{D}{E})+I(B;\Pair{C}{D}|C)+I(A;\Pair{D}{E}|\Pair{C}{D})\tag{4d}\\
% Inequality 5d:
    I(A;B) &\le I(A;C)+I(B;D|C)+I(A;\Pair{D}{E}|D) \notag\\
                   &\Gap +I(A;B|\Pair{C}{E})+I(B;\Pair{C}{E}|\Pair{D}{E})\tag{5d}\\
% Inequality 7d:
    I(A;B) &\le I(A;C|D)+I(A;\Pair{C}{E}|C)+I(B;D) \notag\\
                   &\Gap +I(B;\Pair{D}{E}|\Pair{C}{E})+I(A;B|\Pair{D}{E})\tag{7d}\\
% Inequality 11d:
   I(A;\Pair{B}{C}) &\le I(A;\Pair{B}{C}|\Pair{B}{D})+I(A;\Pair{C}{E})+I(A;\Pair{B}{D}|\Pair{D}{E}) \notag\\
                    &\Gap +I(\Pair{B}{C};\Pair{D}{E}|\Pair{C}{E})\tag{11d}\\
% Inequality 12d:
   I(A;\Pair{B}{C}) &\le I(A;C)+I(A;\Pair{B}{D}|D)+I(A;D|E) \notag\\
                   &\Gap +I(\Pair{B}{C};E|C)+I(A;\Pair{B}{C}|\Pair{B}{E})+ I(\Pair{B}{C};\Pair{B}{E}|\Pair{B}{D})\tag{12d}\\
% Inequality 13d:
   I(A;\Pair{B}{C}) &\le I(A;\Pair{B}{D}|D)+I(A;\Pair{C}{E})+I(\Pair{B}{C};D|\Pair{C}{E}) \notag\\
                   &\Gap +I(A;\Pair{B}{C}|\Pair{B}{E})+I(C;\Pair{B}{E}|\Pair{B}{D})\tag{13d}\\
% Inequality 14d:
   I(A;\Pair{B}{C}) &\le I(A;D)+I(\Pair{B}{D};\Pair{D}{E}|D)+I(A;\Pair{B}{C}|\Pair{C}{E}) \notag\\
                   &\Gap +I(A;\Pair{B}{C}|\Pair{B}{D})+ I(A;\Pair{C}{E}|\Pair{D}{E})\tag{14d}\\
% Inequality 15d:
   I(A;\Pair{B}{C}) &\le I(A;D)+I(\Pair{B}{D};E|D)+I(A;\Pair{C}{E}|E) \notag\\
                   &\Gap +I(A;\Pair{B}{C}|\Pair{C}{D})+I(A;\Pair{B}{C}|\Pair{B}{D})+I(\Pair{B}{C};\Pair{C}{D}|\Pair{C}{E})\tag{15d}\\
% Inequality 16d:
   I(A;\Pair{B}{C}) &\le I(A;\Pair{B}{C}|\Pair{C}{D})+I(A;\Pair{B}{C}|\Pair{B}{D})+I(\Pair{B}{C};\Pair{D}{E}|E) \notag\\
                   &\Gap +I(\Pair{B}{D};\Pair{C}{D}|\Pair{D}{E})+ I(A;E)\tag{16d}\\
% Inequality 17d:
   I(\Pair{A}{B};\Pair{C}{D}) &\le I(\Pair{A}{B};D)+I(\Pair{A}{B};\Pair{C}{D}|\Pair{B}{C})+I(\Pair{A}{B};\Pair{C}{D}|\Pair{A}{C}) \notag\\
                   &\Gap +I(\Pair{A}{B};\Pair{B}{C}|\Pair{B}{E})+I(\Pair{A}{B};\Pair{A}{C}|\Pair{A}{E})+I(\Pair{A}{E};\Pair{B}{E}|\Pair{D}{E}) \notag\\
                   &\Gap +I(\Pair{C}{D};\Pair{D}{E}|D)\tag{17d}
\end{align}
For instance, inequality (5d) is obtained from Theorem~\ref{thm:list1}
using the list of
random variables $$A,B,C,D,(\Pair{C}{E}),(\Pair{D}{E}).$$
Another approach is to prove the inequality
\begin{align*}
       I(A;B) &\le I(A;C)+I(B;D|C)+I(A;F|D)+I(A;B|E)+I(B;E|F)
\end{align*}
directly from Theorem~\ref{thm:list1} and then apply the variable substitution
\begin{align*}
       (A,B,C,D,E,F) \rightarrow (A,B,C,D,(\Pair{C}{E}),(\Pair{D}{E}))
\end{align*}
to get (5d).  Similarly, the other inequalities listed above are
substitution instances of linear-variable inequalities on five to eight
variables.  (Note that (3d), (4d), and (11d) are substitution instances of (1c).)

We will now generalize Theorem~\ref{thm:tree1} so as to generate additional
inequalities.  One easy but apparently useless generalization is to replace
the binary tree with a binary forest (a finite disjoint union of binary trees).
Then the hypotheses of Theorem~\ref{thm:tree1} can be stated just as
before (with ``the root'' replaced by ``each root''); and the conclusion is
the same except that the inequality becomes
\begin{align}
    mI(A;B) &\le \mbox{sum of all the node labels in the trees}
\end{align}
where $m$ is the number of trees (eqivalently, the number of root nodes).

This modification alone is useless because the resulting inequality is
just a sum of Theorem~\ref{thm:tree1} inequalities, one for each tree.
But it will become useful when combined with another modification.
For this we need a tightening of Lemma~\ref{lem:RtoZ}:

\begin{lemma}
\label{lem:RtoZplus}
The inequality
$H(Z|R)+I(R;S|T) \ge I(Z;S|T)+H(Z|\Triple{R}{S}{T})$
is a Shannon inequality.
\end{lemma}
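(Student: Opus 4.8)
The plan is to prove Lemma~\ref{lem:RtoZplus} by adapting the chain of Shannon inequalities used for Lemma~\ref{lem:RtoZ}, but being more careful to track the slack. Recall that in the proof of Lemma~\ref{lem:RtoZ} the one genuine inequality step was $H(S|\Pair{Z}{T}) \ge H(S|\Triple{R}{Z}{T})$, whose slack is exactly $I(R;S|\Pair{Z}{T}) \ge 0$; the new term $H(Z|\Triple{R}{S}{T})$ on the right-hand side suggests that this slack is really what we want to recover, modulo rewriting. So first I would start from the claimed inequality written in basic $H$-terms (or equivalently in conditional entropies) and reduce it to showing that some single basic Shannon quantity is nonnegative.

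Concretely, I would compute as follows. Write everything conditioned on $T$ (one can suppress $T$ throughout and add it back at the end, exactly as in Lemma~\ref{lem:RtoZ}'s last line). Then the desired inequality becomes $H(Z|R) + I(R;S) \ge I(Z;S) + H(Z|\Pair{R}{S})$, i.e. $H(Z|R) + H(R) + H(S) - H(\Pair{R}{S}) \ge H(Z) + H(S) - H(\Pair{Z}{S}) + H(\Triple{Z}{R}{S}) - H(\Pair{R}{S})$. Cancelling $H(S)$ and $H(\Pair{R}{S})$ from both sides and using $H(Z|R) = H(\Pair{Z}{R}) - H(R)$, this simplifies to $H(\Pair{Z}{R}) + H(\Pair{Z}{S}) \ge H(Z) + H(\Triple{Z}{R}{S})$, which is precisely $I(R;S|Z) \ge 0$ — a basic Shannon inequality. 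Re-introducing the conditioning on $T$ throughout, the statement is exactly $I(R;S|\Triple{Z}{T}) \ge 0$, a basic Shannon inequality, so the lemma follows.

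For the write-up I would present it in the same style as Lemma~\ref{lem:RtoZ}: either give the short algebraic reduction above verbatim, or give a displayed chain
\begin{align*}
 H(Z|R) + H(S|T) + H(\Pair{Z}{S}|T) - H(S|\Pair{Z}{T})
   &= H(Z|R) + H(\Pair{Z}{S}|T) \\
   &\ge \ \dotsb
\end{align*}
mirroring the original. Either way the argument is elementary. I do not expect a genuine obstacle here; the only mild subtlety is bookkeeping — making sure the $T$-conditioning is handled uniformly and that the cancellation of $H$-terms is done correctly so that the residual quantity is manifestly $I(R;S|\Triple{Z}{T})$ rather than something that merely looks close to it. Since a Shannon-inequality verifier such as {\tt ITIP} can confirm the final form instantly, the real content of the proof is just exhibiting this one-line reduction to $I(R;S|\Triple{Z}{T}) \ge 0$.
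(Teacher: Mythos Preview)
Your approach is correct and is essentially the paper's: both reduce the claim to a nonnegative combination of elemental Shannon terms, and the paper's displayed chain is just your expansion presented step by step (the paper phrases it as ``the proof of Lemma~\ref{lem:RtoZ} with the slack made explicit in one step''). One small correction to your bookkeeping, exactly the subtlety you flagged: because the term $H(Z|R)$ in the statement carries no $T$, suppressing and then uniformly re-introducing $T$ does not literally recover the original inequality, and the full reduction is to $I(R;S|\Pair{Z}{T}) + I(Z;T|R) \ge 0$ rather than to $I(R;S|\Pair{Z}{T}) \ge 0$ alone --- equivalently, one first uses $H(Z|R) \ge H(Z|\Pair{R}{T})$ (this is the \emph{first} line in the proof of Lemma~\ref{lem:RtoZ}, not the last) and then your $T$-conditioned computation goes through verbatim.
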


\begin{proof} The proof is just as for Lemma~\ref{lem:RtoZ},
with the slack made explicit in one step.
Using Shannon inequalities, we see that
\begin{align*}
 H(Z|R)+H(S|\Pair{Z}{T})
      &\ge H(Z|\Pair{R}{T})+H(S|\Pair{Z}{T})\\
      &= H(Z|\Triple{R}{S}{T})+I(S;Z|\Pair{R}{T})+H(S|\Pair{Z}{T})\\
      &\ge H(Z|\Triple{R}{S}{T})+I(S;Z|\Pair{R}{T})+H(S|\Triple{R}{Z}{T})\\
      &= H(Z|\Triple{R}{S}{T})+H(S|\Pair{R}{T}).
\end{align*}
So $H(Z|R)-H(S|\Pair{R}{T}) \ge H(Z|\Triple{R}{S}{T})-H(S|\Pair{Z}{T})$;
add $H(S|T)$ to both sides to
get the desired result.
\end{proof}

Using this twice (and noting that $I(Z;Z|T)=H(Z|T)$ and
$H(Z|\Triple{Z}{S}{T})=0$), we get
\begin{align}
H(Z|R)+H(Z|S)+I(R;S|T) \ge H(Z|T)+H(Z|\Triple{R}{S}{T}).
\label{eq:ZRST}
\end{align}
The case where $T$ is a null variable gives
\begin{align}
H(Z|R)+H(Z|S)+I(R;S) \ge H(Z)+H(Z|\Pair{R}{S}).
\label{eq:ZRS}
\end{align}
These give us additional options in proving inequalities, as shown below.

\begin{proof}[Proof of inequality (8)]
Let $Z$ be a common information of $A$ and $B$; then
\begin{align*}
&         I(A;B|C)+I(A;B|D)+I(A;B|E)+I(C;D)+I(\Pair{C}{D};E)\\
      &\ge I(Z;Z|C)+I(Z;Z|D)+I(Z;Z|E)+I(C;D)+I(\Pair{C}{D};E) &\Comment{Corollary \ref{cor:1}}\\
       &= H(Z|C)+H(Z|D)+H(Z|E)+I(C;D)+I(\Pair{C}{D};E)\\
      &\ge H(Z)+H(Z|\Pair{C}{D})+H(Z|E)+I(\Pair{C}{D};E)               &\Comment{\eqref{eq:ZRS}}\\
      &\ge H(Z)+H(Z)+H(Z|\Triple{C}{D}{E})                        &\Comment{\eqref{eq:ZRS}}\\
      &\ge 2H(Z) \\
       &= 2I(A;B).
\end{align*}
\end{proof}

This proof immediately generalizes to give:
If $A$ and $B$ have a common information, then
\begin{align}
   (n-1)I(A;B) &\le I(A;B|C_1) + I(A;B|C_2) + \dots I(A;B|C_n) + {}\notag\\
          &\Gap  + [I(C_1;C_2) + I(C_1C_2;C_3) + \dots + I(C_1C_2 \dots C_{n-1};C_n)].
\label{eq:np2varineq}
\end{align}
The expression in brackets is actually symmetric in
$C_1,C_2,\dots,C_n$; it is equal to
\begin{align*}
H(C_1)+H(C_2)+\dots+H(C_n)-H(C_1C_2\dots C_n).
\end{align*}

One can use Lemma~\ref{lem:RtoZplus} to produce an extended form of
Theorem~\ref{thm:tree1}
in which an additional option is available: instead of having a
left child, a node can have a \textit{left pointer} pointing to some
other node anywhere in the tree or forest, and similarly on the right side.

\begin{theorem}
\label{thm:tree2}
Suppose we have a finite binary forest where each node is labeled
with an information term $I(x;y|z)$, where $z$ is empty at each root node
(i.e., the root labels are of the form $I(x;y)$).
These terms may involve any variables.  We single
out two variables or combinations of variables, called $A$ and $B$.
Suppose that, for each node of the
forest, if its label is $I(x;y|z)$ [with $z$ possibly empty],
then:

$(a)$ $x$ is $A$ or $B$ and there is no left child, or

$(b)$ there is a left child of this node and it is labeled
$I(r;s|x)$ for some $r,s$, or

$(c)$  there is a left pointer at this node pointing to some
other node whose label is $I(r';s'|t')$ where $x=(\Triple{r'}{s'}{t'})$;
\newline and

$(a')$ $y$ is $A$ or $B$ and there is no right child, or

$(b')$ there is a right child of this node and it is labeled
$I(r';s'|y)$ for some $r',s'$, or

$(c')$  there is a right pointer at this node pointing to some
other node whose label is $I(r';s'|t')$ where $y=(\Triple{r'}{s'}{t'})$.
\newline
Suppose further that no node is the destination of more than one pointer.
Let $m$ be the number of trees in the forest (equivalently, the number
of root nodes).  Then the inequality
\begin{align}
    mI(A;B) &\le \mbox{sum of all the node labels in the trees}
\label{eq:tree2ineq}
\end{align}
is a linear rank inequality (in fact, it is
true whenever $A$ and $B$ have a common information).
\end{theorem}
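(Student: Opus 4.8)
The plan is to mimic the induction of Theorem~\ref{thm:tree1}, but to carry the ``slack'' terms $H(Z|\Triple{r}{s}{t})$ produced by Lemma~\ref{lem:RtoZplus} rather than discarding them, so that these slack terms can be absorbed by the nodes that pointers point \emph{to}. Concretely, let $Z$ be a new variable and let $N$ be the set of all nodes in the forest. For each node $n$ with label $I(r_n;s_n|t_n)$, I will prove, by induction on the subtree $T_n$ rooted at $n$, a Shannon inequality of the shape
\begin{align*}
    H(Z|t_n) + \sum_{p \in P_n} H(Z|\Triple{r_p}{s_p}{t_p})
        \;\le\; \sum_{q \in T_n} I(r_q;s_q|t_q) + j_n H(Z|A) + k_n H(Z|B),
\end{align*}
where $P_n$ is the set of nodes in $T_n$ that are the destination of a pointer originating \emph{outside} $T_n$, and $j_n,k_n \ge 0$. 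Here $t_n$ is read as empty if $n$ is a root. At a leaf, condition $(a)$/$(a')$ forces $r_n$ (resp.\ $s_n$) to be $A$ or $B$, and the base case is an instance of Lemma~\ref{lem:RtoZ} or a trivial rearrangement, with $P_n=\emptyset$ unless the leaf itself is a pointer target. For the inductive step at an internal node $n$, I handle the left side in one of three ways according to which of $(a)$, $(b)$, $(c)$ holds: if $(a)$, $H(Z|r_n)$ is $H(Z|A)$ or $H(Z|B)$ and gets charged to $j_n$ or $k_n$; if $(b)$, the left child $\ell$ has label $I(r_\ell;s_\ell|r_n)$, so applying Lemma~\ref{lem:RtoZplus} with $(R,S,T)=(r_n,s_n,t_n)$ — actually the cleaner route is to use~\eqref{eq:ZRST}-style bookkeeping, bounding $H(Z|r_n)$ via the child's inductive inequality and paying $H(Z|\Triple{r_\ell}{s_\ell}{r_n})$ as new slack (which the child's left/right structure will have accounted for, since $r_n = (\Triple{r_\ell}{s_\ell}{r_n})$'s role matches); if $(c)$, the left pointer points to a node $p$ with $r_n = (\Triple{r_p}{s_p}{t_p})$, so $H(Z|r_n) = H(Z|\Triple{r_p}{s_p}{t_p})$ is \emph{exactly} one of the slack terms, and I move it into the $\sum_{p \in P_\bullet}$ bookkeeping. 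The symmetric treatment applies to the right side via $(a')$, $(b')$, $(c')$.

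The cases $(b)$ and $(b')$ are where Lemma~\ref{lem:RtoZplus} does the real work: it converts $H(Z|r_n) + I(r_\ell;s_\ell|r_n) \ge I(Z;s_\ell|r_n) + H(Z|\Triple{r_\ell}{s_\ell}{r_n})$, and iterating down a child we land back in the same form needed to invoke the child's inductive hypothesis (after the usual Shannon step $I(Z;s_\ell|r_n)\ge H(Z|r_n)$ when $s_\ell$ is eventually reduced to $Z$, exactly as in the Ingleton/inequality-(1) proofs). The key accounting identity is that every slack term $H(Z|\Triple{r_p}{s_p}{t_p})$ created by a $(b)$/$(b')$ step at some ancestor, when that ancestor's pointer structure is $(c)$/$(c')$, is consumed precisely once — this is where the hypothesis ``no node is the destination of more than one pointer'' is used, and it is also why each tree contributes exactly one ``$H(Z)$'' at its root. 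Finally, applying the root-level inequality to each of the $m$ roots (where $t=\varnothing$, so the left side is $H(Z) + (\text{slacks})$), summing over the $m$ trees, noting that the pointer-target slack terms on the left of one tree's inequality cancel against slack terms produced inside the forest, and setting $Z$ to be a common information of $A$ and $B$ (so $H(Z|A)=H(Z|B)=0$ and $H(Z)=I(A;B)$), yields $mI(A;B) \le \sum_{q\in N} I(r_q;s_q|t_q)$, which is~\eqref{eq:tree2ineq}.

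The main obstacle is the global pointer bookkeeping: unlike Theorem~\ref{thm:tree1}, where the induction is purely local to each subtree, here a pointer can cross between trees or jump to an unrelated part of a tree, so the slack terms must be tracked as a global ledger rather than being cleanly localized. The ``no node is the destination of more than one pointer'' hypothesis is exactly what makes the ledger balance — each slack term is produced at most once per $(b)$/$(b')$ edge and consumed at most once per $(c)$/$(c')$ pointer — but verifying that the partial sums stay nonnegative through the induction (so that we never need a negative coefficient, which would break the Shannon-inequality status) requires care. I expect the cleanest writeup is to not try to localize at all: instead, prove directly that the single inequality
\begin{align*}
    mH(Z) \;\le\; \sum_{q\in N} I(r_q;s_q|t_q) + J H(Z|A) + K H(Z|B)
\end{align*}
is a Shannon inequality for suitable $J,K\ge 0$, by exhibiting the explicit nonnegative combination of basic Shannon inequalities coming from applying Lemma~\ref{lem:RtoZplus} once per non-root node and Lemma~\ref{lem:RtoZ} once per leaf, then checking term-by-term that everything cancels; this sidesteps the subtree induction entirely and reduces the whole theorem to a finite linear-algebra identity, which (as the excerpt notes for the five-variable cases) can even be machine-verified by {\tt ITIP}.
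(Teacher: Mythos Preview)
Your overall strategy matches the paper's: introduce a fresh variable $Z$, prove by induction over subtrees a Shannon inequality involving $H(Z|t_n)$ and pointer ``slack'' terms, sum over the $m$ roots so the pointer terms cancel globally, and then specialize $Z$ to a common information of $A$ and $B$. The key lemma you invoke, Lemma~\ref{lem:RtoZplus} (equivalently \eqref{eq:ZRST}), is also the one the paper uses.

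However, your inductive invariant is misstated in a way that breaks case~$(c)$. You track only the \emph{incoming} pointer terms (your set $P_n$ of nodes in $T_n$ targeted by a pointer from outside), but you omit the \emph{outgoing} ones. The paper's invariant is
\[
H(Z|t_n)\;\le\;\sum_{q\in T_n} I(r_q;s_q|t_q)\;+\;\mathrm{Out}_n\;-\;\mathrm{In}_n\;+\;j_nH(Z|A)+k_nH(Z|B),
\]
where $\mathrm{Out}_n$ (resp.\ $\mathrm{In}_n$) is the sum of $H(Z|\Triple{r_p}{s_p}{t_p})$ over pointers \emph{from} (resp.\ \emph{to}) nodes of $T_n$. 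Moving $\mathrm{In}_n$ to the left gives your $\sum_{p\in P_n}$ term (up to internal-pointer cancellation), but you have no $\mathrm{Out}_n$ on the right. Without it, case~$(c)$ cannot be discharged: if $n$ carries a left pointer to an external node $p$, then $H(Z|r_n)=H(Z|\Triple{r_p}{s_p}{t_p})$ has nowhere to go in your inequality. A single root node with a $(c)$-pointer and a $(a')$-right already gives a counterexample to your stated invariant.

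Relatedly, your description of case~$(b)$ conflates two distinct roles. The child edge is handled purely by the child's inductive hypothesis (its conditioning variable equals $r_n$, so it directly bounds $H(Z|r_n)$). The extra slack term $H(Z|\Triple{r_n}{s_n}{t_n})$ produced by applying \eqref{eq:ZRST} \emph{at the node $n$ itself} is what pays for a pointer \emph{into} $n$ (the $\mathrm{In}$ side); it has nothing to do with the children. With the $\mathrm{Out}_n-\mathrm{In}_n$ bookkeeping in place, the induction is entirely local, and when you sum over roots every pointer contributes once to some $\mathrm{Out}$ and once to some $\mathrm{In}$, so the totals cancel---this is exactly the paper's argument, and it obviates your proposed fallback of abandoning the subtree induction for a monolithic global verification.
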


\begin{proof} As with Theorem~\ref{thm:tree1}, let $Z$ be a new variable.
For any left or right pointer, if $I(r;s|t)$ is the label at the destination
of the pointer, we say that the \textit{term associated with the pointer}
is $H(Z|\Triple{r}{s}{t})$.
We prove by induction in the forest (upward from the leaves toward the roots)
that, for each node $n$, if $T_n$ is the subtree rooted at $n$, and the node
label at $n$ is $I(r;s|t)$, then we have as a Shannon inequality
\begin{align}
\label{eq:forestsum}
    H(Z|t) &\le \mbox{sum of node labels in $T_n$} + \mbox{Out}_n - \mbox{In}_n
              + j_n H(Z|A) + k_n H(Z|B)
\end{align}
for some $j_n,k_n \ge 0$, where $\mbox{Out}_n$ is the sum of the terms associated
with pointers \textit{from} nodes in $T_n$ and $\mbox{In}_n$ is the sum of the
terms asasociated with pointers \textit{to} nodes in $T_n$.  (A pointer whose
source and destination are both in $T_n$ will contribute to both sums, but these
contributions will cancel each other out.) 
The inductive step uses Lemma~\ref{lem:RtoZplus}; the new term in that lemma
is used to handle the case where there is a pointer with destination~$n$
(note that, by assumption, there is at most one such pointer).
Once \eqref{eq:forestsum} is proved, apply it to all of the root nodes and
add the resulting inequalities together to get
\begin{align}
    mH(Z) &\le \mbox{sum of all the node labels in the trees}
              + j H(Z|A) + k H(Z|B)
\label{eq:tree2sum}
\end{align}
for some $j,k \ge 0$; the pointer sums cancel out because each pointer contributes
to one $\mbox{Out}$ sum and one $\mbox{In}$ sum.
Applying \eqref{eq:tree2sum} when $Z$ is a common information of $A$ and $B$
gives the desired result \eqref{eq:tree2ineq}.
\end{proof}

Theorem~\ref{thm:tree2}
can be used to prove inequalities (8) and (9)
using the following diagrams (pointers are represented as dashed curves):

\begin{center}
\begin{picture}(6.5,1.25)(-0.25,-0.25)
\thicklines
\put(0.4,0.5){\makebox(0,0)[r]{(8):}}
\put(4,0.5){\makebox(0,0){$I(\Pair{C}{D};E)$}}
\put(4.375,0.125){\line(-1,2){0.125}}
\put(2,0.5){\makebox(0,0){$I(C;D)$}}
\put(4.625,0){\makebox(0,0){$I(A;B|E)$}}
\put(1.625,0.125){\line(1,2){0.125}}
\put(2.375,0.125){\line(-1,2){0.125}}
\put(1.375,0){\makebox(0,0){$I(A;B|C)$}}
\put(2.625,0){\makebox(0,0){$I(A;B|D)$}}
\put(3.0,0.5){\makebox(0,0){\epsfxsize=1.5\unitlength\epsffile{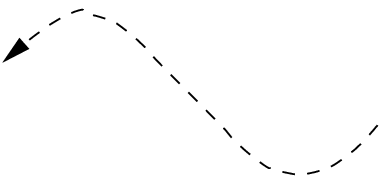}}}
\end{picture}
\end{center}

\begin{center}
\begin{picture}(6.5,1.5)(-0.25,-0.5)
\thicklines
\put(0.4,0.5){\makebox(0,0)[r]{(9):}}
\put(1.625,0.5){\makebox(0,0){$I(A;C)$}}
\put(2,0.125){\line(-1,2){0.125}}
\put(2.25,0){\makebox(0,0){$I(B;\Pair{D}{E}|C)$}}
\put(4.5,0.5){\makebox(0,0){$I(D;E)$}}
\put(4.125,0.125){\line(1,2){0.125}}
\put(4.875,0.125){\line(-1,2){0.125}}
\put(3.875,0){\makebox(0,0){$I(A;B|D)$}}
\put(5.125,0){\makebox(0,0){$I(A;B|E)$}}
\put(3.375,0.25){\makebox(0,0){\epsfxsize=1.8\unitlength\epsffile{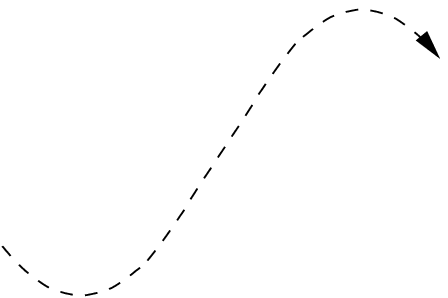}}}
\end{picture}
\end{center}

And by using equivalent forms of terms as was done in formulas (3d)
through~(17d), one can use Theorem~\ref{thm:tree2} to prove formulas
(6), (10), (19b), and (21b)--(24b) via the following diagrams:

\begin{center}
\begin{picture}(6.5,2.25)(-0.25,-0.5)
\thicklines
\put(0.4,1.5){\makebox(0,0)[r]{(6):}}
\put(2.125,1.5){\makebox(0,0){$I(A;C)$}}
\put(2.5,1.125){\line(-1,2){0.125}}
\put(2.75,1.0){\makebox(0,0){$I(\Pair{C}{D};E|C)$}}
\put(2.375,0.625){\line(1,2){0.125}}
\put(2.125,0.5){\makebox(0,0){$I(A;B|\Pair{C}{D})$}}
\put(3.125,0.625){\line(-1,2){0.125}}
\put(3.375,0.5){\makebox(0,0){$I(B;\Pair{D}{E}|E)$}}
\put(3.75,0.125){\line(-1,2){0.125}}
\put(4,0){\makebox(0,0){$I(A;\Triple{C}{D}{E}|\Pair{D}{E})$}}
\put(3.9375,0.5){\makebox(0,0){\epsfxsize=1.9\unitlength\epsffile{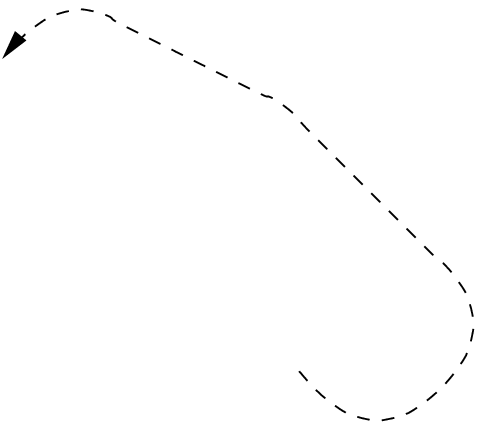}}}
\end{picture}
\end{center}

\begin{center}
\begin{picture}(6.5,2)(-0.25,-0.5)
\thicklines
\put(0.4,1.0){\makebox(0,0)[r]{(10):}}
\put(1.125,1.0){\makebox(0,0){$I(A;E)$}}
\put(1.5,0.625){\line(-1,2){0.125}}
\put(1.75,0.5){\makebox(0,0){$I(B;\Pair{D}{E}|E)$}}
\put(2.125,0.125){\line(-1,2){0.125}}
\put(2.375,0){\makebox(0,0){$I(A;\Pair{C}{D}|\Pair{D}{E})$}}
\put(4.125,1){\makebox(0,0){$I(C;D)$}}
\put(3.75,0.625){\line(1,2){0.125}}
\put(4.5,0.625){\line(-1,2){0.125}}
\put(3.5,0.5){\makebox(0,0){$I(A;B|C)$}}
\put(4.75,0.5){\makebox(0,0){$I(A;B|D)$}}
\put(3.25,0.5){\makebox(0,0){\epsfxsize=1.3\unitlength\epsffile{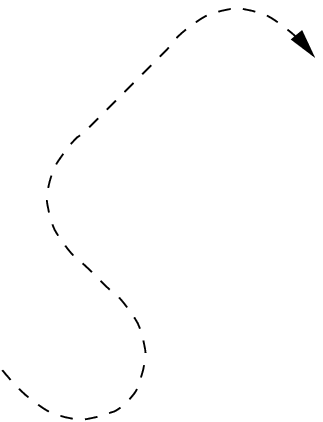}}}
\end{picture}
\end{center}

\begin{center}
\begin{picture}(6.5,2.5)(-0.25,-1)
\thicklines
\put(0.4,1){\makebox(0,0)[r]{(19b):}}
\put(1.125,1.0){\makebox(0,0){$I(B;D)$}}
\put(0.75,0.625){\line(1,2){0.125}}
\put(0.5,0.5){\makebox(0,0){$I(A;\Pair{B}{C}|B)$}}
\put(1.5,0.625){\line(-1,2){0.125}}
\put(1.75,0.5){\makebox(0,0){$I(A;\Pair{C}{D}|D)$}}
\put(2.125,0.125){\line(-1,2){0.125}}
\put(2.375,0){\makebox(0,0){$I(\Pair{B}{C};\Pair{D}{E}|\Pair{C}{D})$}}
\put(4.125,1){\makebox(0,0){$I(D;E)$}}
\put(3.625,0.625){\line(1,1){0.25}}
\put(4.5,0.625){\line(-1,2){0.125}}
\put(3.375,0.5){\makebox(0,0){$I(A;C|D)$}}
\put(3.75,0.125){\line(-1,2){0.125}}
\put(4,0){\makebox(0,0){$I(A;\Pair{B}{C}|C)$}}
\put(5.125,0.125){\line(-1,2){0.125}}
\put(5.375,0){\makebox(0,0){$I(\Pair{B}{C};\Pair{B}{D}|\Pair{B}{E})$}}
\put(4.75,0.5){\makebox(0,0){$I(A;\Pair{B}{E}|E)$}}
\put(3.25,0.5){\makebox(0,0){\epsfxsize=1.3\unitlength\epsffile{figs/pointer10.eps}}}
\put(3.59,0.25){\makebox(0,0){\epsfxsize=4.5\unitlength\epsffile{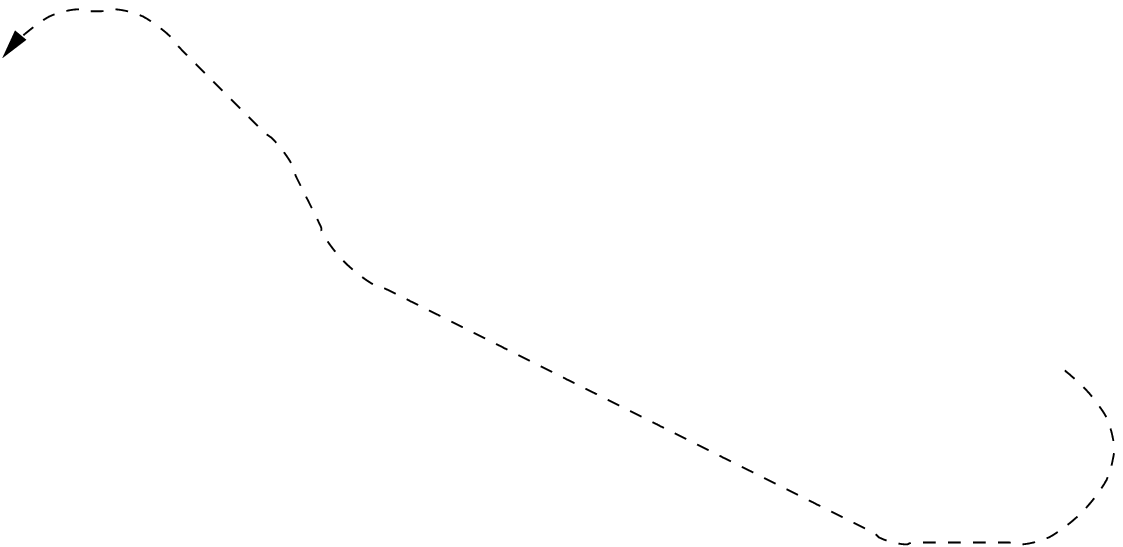}}}
\end{picture}
\end{center}

\begin{center}
\begin{picture}(6.5,2.25)(-0.25,-0.75)
\thicklines
\put(0.4,1){\makebox(0,0)[r]{(21b):}}
\put(1.125,1.0){\makebox(0,0){$I(B;D)$}}
\put(0.75,0.625){\line(1,2){0.125}}
\put(0.5,0.5){\makebox(0,0){$I(A;\Pair{B}{C}|B)$}}
\put(1.5,0.625){\line(-1,2){0.125}}
\put(1.75,0.5){\makebox(0,0){$I(A;\Pair{C}{D}|D)$}}
\put(2.125,0.125){\line(-1,2){0.125}}
\put(2.375,0){\makebox(0,0){$I(\Pair{B}{C};\Pair{C}{E}|\Pair{C}{D})$}}
\put(2.75,-0.375){\line(-1,2){0.125}}
\put(3,-0.5){\makebox(0,0){$I(A;\Pair{B}{C}|\Pair{C}{E})$}}
\put(4.125,1){\makebox(0,0){$I(C;E)$}}
\put(3.75,0.625){\line(1,2){0.125}}
\put(4.5,0.625){\line(-1,2){0.125}}
\put(3.5,0.5){\makebox(0,0){$I(A;\Pair{B}{C}|C)$}}
\put(5.125,0.125){\line(-1,2){0.125}}
\put(5.375,0){\makebox(0,0){$I(\Pair{B}{D};\Pair{C}{E}|\Pair{D}{E})$}}
\put(4.75,0.5){\makebox(0,0){$I(A;\Pair{D}{E}|E)$}}
\put(3.25,0.5){\makebox(0,0){\epsfxsize=3.8\unitlength\epsffile{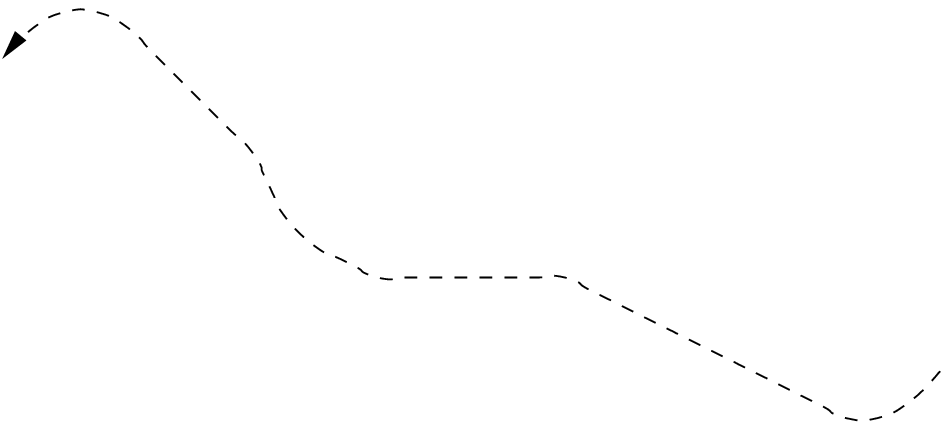}}}
\put(5.3125,0.5){\makebox(0,0){\epsfxsize=1.9\unitlength\epsffile{figs/pointer06.eps}}}
\end{picture}
\end{center}

\begin{center}
\begin{picture}(6.5,3)(-0.25,-0.5)
\thicklines
\put(0.32,2){\makebox(0,0)[r]{(22b):}}
\put(1.0,1.0){\makebox(0,0){$I(B;E)$}}
\put(0.625,0.625){\line(1,2){0.125}}
\put(0.375,0.5){\makebox(0,0){$I(A;\Pair{B}{C}|B)$}}
\put(1.375,0.625){\line(-1,2){0.125}}
\put(1.625,0.5){\makebox(0,0){$I(A;\Pair{D}{E}|E)$}}
\put(2,0.125){\line(-1,2){0.125}}
\put(2.25,0){\makebox(0,0){$I(\Pair{B}{E};\Pair{C}{D}|\Pair{D}{E})$}}
\put(2.75,1.5){\makebox(0,0){$I(C;D)$}}
\put(2.375,1.125){\line(1,2){0.125}}
\put(2.125,1){\makebox(0,0){$I(A;\Pair{B}{C}|C)$}}
\put(3.125,1.125){\line(-1,2){0.125}}
\put(3.375,1){\makebox(0,0){$I(A;\Pair{B}{D}|D)$}}
\put(3.75,0.625){\line(-1,2){0.125}}
\put(4,0.5){\makebox(0,0){$I(\Pair{B}{C};\Pair{B}{E}|\Pair{B}{D})$}}
\put(4.5,2.0){\makebox(0,0){$I(B;E)$}}
\put(4.125,1.625){\line(1,2){0.125}}
\put(3.875,1.5){\makebox(0,0){$I(A;\Pair{B}{C}|B)$}}
\put(4.875,1.625){\line(-1,2){0.125}}
\put(5.125,1.5){\makebox(0,0){$I(A;C|E)$}}
\put(5.5,1.125){\line(-1,2){0.125}}
\put(5.75,1){\makebox(0,0){$I(A;\Pair{B}{C}|C)$}}
\put(0.875,0.5){\makebox(0,0){\epsfxsize=2.3\unitlength\epsffile{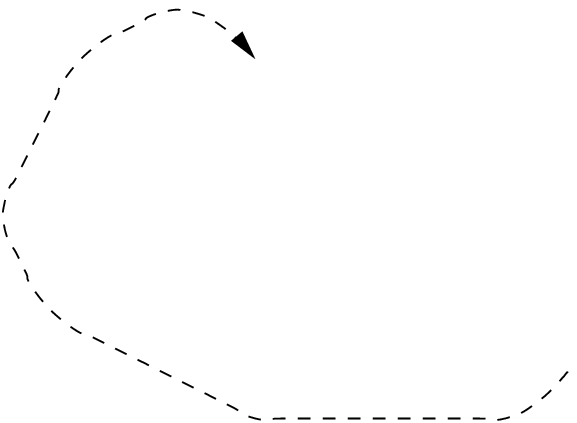}}}
\put(2.3125,0.75){\makebox(0,0){\epsfxsize=1.8\unitlength\epsffile{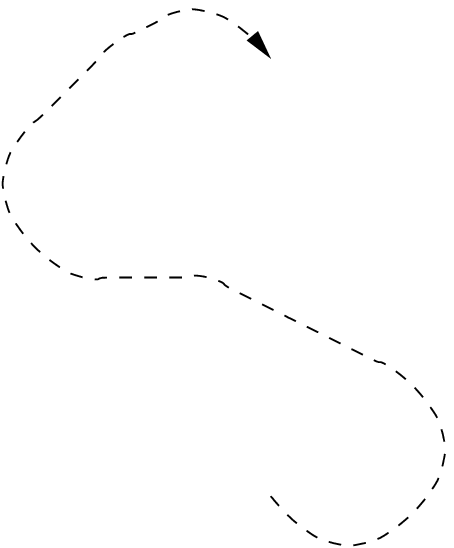}}}
\put(4.0625,1.25){\makebox(0,0){\epsfxsize=1.8\unitlength\epsffile{figs/pointer22b.eps}}}
\end{picture}
\end{center}

\begin{center}
\begin{picture}(6.5,2.5)(-0.125,-1)
\thicklines
\put(0.525,1){\makebox(0,0)[r]{(23b):}}
\put(1.125,1.0){\makebox(0,0){$I(B;E)$}}
\put(0.75,0.625){\line(1,2){0.125}}
\put(0.5,0.5){\makebox(0,0){$I(A;\Pair{B}{C}|B)$}}
\put(1.5,0.625){\line(-1,2){0.125}}
\put(1.75,0.5){\makebox(0,0){$I(A;\Pair{C}{E}|E)$}}
\put(2.125,0.125){\line(-1,2){0.125}}
\put(2.375,0){\makebox(0,0){$I(\Pair{B}{C};\Pair{C}{D}|\Pair{C}{E})$}}
\put(2.75,-0.375){\line(-1,2){0.125}}
\put(3,-0.5){\makebox(0,0){$I(A;\Pair{B}{C}|\Pair{C}{D})$}}
\put(4.125,1){\makebox(0,0){$I(D;E)$}}
\put(3.625,0.625){\line(1,1){0.25}}
\put(4.625,0.625){\line(-1,1){0.25}}
\put(3.375,0.5){\makebox(0,0){$I(A;\Pair{B}{D}|D)$}}
\put(3.75,0.125){\line(-1,2){0.125}}
\put(4,0){\makebox(0,0){$I(\Pair{B}{C};\Pair{B}{E}|\Pair{B}{D})$}}
\put(5.25,0.125){\line(-1,2){0.125}}
\put(5.5,0){\makebox(0,0){$I(\Pair{B}{C};\Pair{D}{E}|\Pair{C}{E})$}}
\put(4.875,0.5){\makebox(0,0){$I(A;\Pair{C}{E}|E)$}}
\put(2.92,0.25){\makebox(0,0){\epsfxsize=3.1\unitlength\epsffile{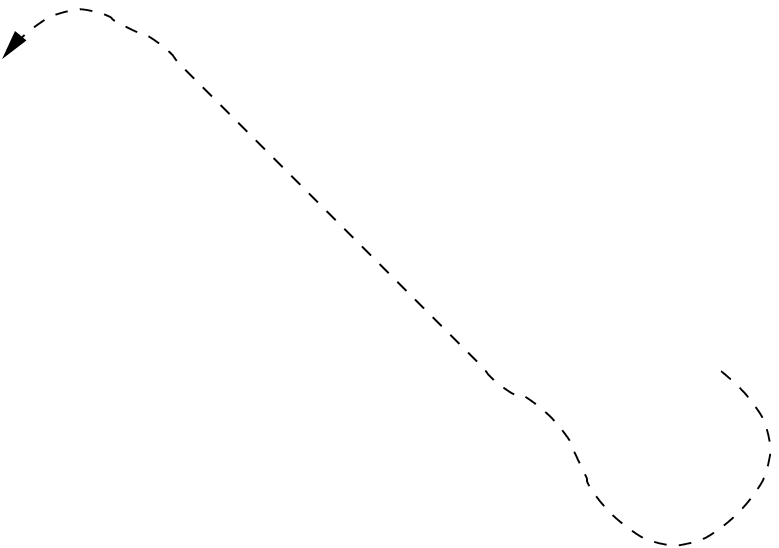}}}
\put(5.375,0.5){\makebox(0,0){\epsfxsize=2.1\unitlength\epsffile{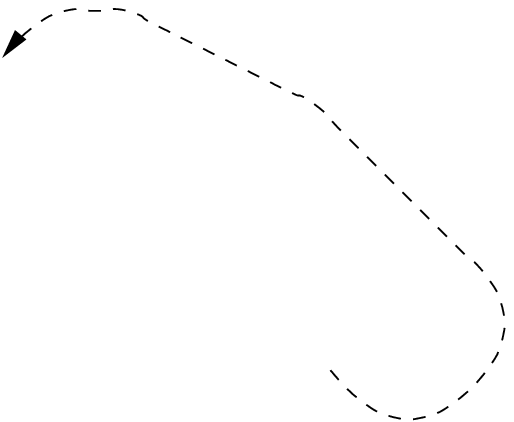}}}
\end{picture}
\end{center}

\begin{center}
\begin{picture}(6.5,2.25)(-0.25,-0.75)
\thicklines
\put(0.4,1){\makebox(0,0)[r]{(24b):}}
\put(1.125,1.0){\makebox(0,0){$I(A;E)$}}
\put(0.75,0.625){\line(1,2){0.125}}
\put(0.5,0.5){\makebox(0,0){$I(\Pair{A}{B};\Pair{C}{D}|A)$}}
\put(1.5,0.625){\line(-1,2){0.125}}
\put(1.75,0.5){\makebox(0,0){$I(\Pair{B}{E};C|E)$}}
\put(2.25,0.125){\line(-1,1){0.25}}
\put(2.5,0){\makebox(0,0){$I(\Pair{A}{B};\Pair{C}{D}|C)$}}
\put(1.25,0.125){\line(1,1){0.25}}
\put(1,0){\makebox(0,0){$I(\Pair{A}{B};\Pair{B}{D}|\Pair{B}{E})$}}
\put(1.375,-0.375){\line(-1,2){0.125}}
\put(1.625,-0.5){\makebox(0,0){$I(\Pair{A}{B};\Pair{C}{D}|\Pair{B}{D})$}}
\put(4.125,1){\makebox(0,0){$I(B;D)$}}
\put(3.625,0.625){\line(1,1){0.25}}
\put(4.5,0.625){\line(-1,2){0.125}}
\put(4.75,0.5){\makebox(0,0){$I(\Pair{C}{D};\Pair{D}{E}|D)$}}
\put(5.125,0.125){\line(-1,2){0.125}}
\put(5.375,0){\makebox(0,0){$I(\Pair{A}{E};\Pair{B}{D}|\Pair{D}{E})$}}
\put(3.375,0.5){\makebox(0,0){$I(\Pair{A}{B};\Pair{C}{D}|B)$}}
\put(3.25,0.5){\makebox(0,0){\epsfxsize=3.8\unitlength\epsffile{figs/pointer21.eps}}}
\put(5.3125,0.5){\makebox(0,0){\epsfxsize=1.9\unitlength\epsffile{figs/pointer06.eps}}}
\end{picture}
\end{center}

One can also get a new extended version of Theorem~\ref{thm:list1}
in the same way, though it is harder to state precisely.  It is also
slightly less flexible because it disallows reuse of the same variable
or combination of variables; and the forest diagrams are easier to
verify by inspection.

Here are two more explicit proofs.

\begin{proof}[Proof of inequality (18)]
Let $Z$ be a common information of $A$ and $B$, and let $Y$ be
a common information of $A$ and $C$; note that we have
$H(\Pair{Y}{Z}|A) = 0$.  Then
\begin{align*}
&         I(B;C)+I(A;B|D)+I(A;C|D)+I(B;D|E) \notag\\
         &\Gap +I(C;D|E)+I(A;E)\\
      &\ge I(Z;Y)+I(\Pair{Y}{Z};Z|D)+I(\Pair{Y}{Z};Y|D)+I(Z;D|E)\notag\\
            &\Gap +I(Y;D|E)+I(\Pair{Y}{Z};E)   &\Comment{Corollary \ref{cor:1}}\\
       &= I(Z;Y)+H(Z|D)+H(Y|D)+I(Z;D|E) \notag\\
           &\Gap +I(Y;D|E)+I(\Pair{Y}{Z};E)\\
      &\ge I(Z;Y)+H(Y|D)+I(Z;Z|E)+I(Y;D|E)+I(\Pair{Y}{Z};E)    &\Comment{Lemma \ref{lem:RtoZ}}\\
      &\ge I(Z;Y)+I(Z;Z|E)+I(Y;Y|E)+I(\Pair{Y}{Z};E)           &\Comment{Lemma \ref{lem:RtoZ}}\\
       &= I(Z;Y)+H(Z|E)+H(Y|E)+I(\Pair{Y}{Z};E)\\
      &\ge I(Z;Y)+H(\Pair{Y}{Z}|E)+I(\Pair{Y}{Z};E)\\
       &= I(Z;Y)+H(\Pair{Y}{Z})\\
       &= H(Z)+H(Y)\\
       &= I(A;B)+I(A;C).
\end{align*}
\end{proof}

\begin{proof}[Proof of inequality (20)]
Let $Z$ be a common information of $A$ and $B$, and let $Y$ be
a common information of $A$ and $C$; note that we have
$H(\Pair{Y}{Z}|A) = 0$ and $H(\Pair{C}{Y}|C) = H(C|\Pair{C}{Y}) = 0$.  Then
\begin{align*}
&         I(B;C)+I(B;D)+I(A;C|D) \notag\\
     &\Gap +I(A;B|E)+I(A;E|B)+ I(C;D|E)+I(B;E|\Pair{C}{D})\\
      &\ge I(B;Y)+I(Z;D)+I(\Pair{Y}{Z};\Pair{C}{Y}|D)\notag\\
            &\Gap +I(Z;Z|E)+I(Y;E|B)+ I(Y;D|E)+I(Z;E|\Pair{C}{D})   &\Comment{Corollary \ref{cor:1}}\\
      &= I(B;Y)+I(Z;D)+I(\Pair{Y}{Z};\Pair{C}{Y}|D)\notag\\
            &\Gap +I(Z;Z|E)+I(Y;E|B)+ I(Y;D|E)+I(Z;E|\Triple{C}{Y}{D}) \notag\\
      &= I(\Pair{B}{E};Y)+I(Z;D)+I(\Pair{Y}{Z};\Pair{C}{Y}|D)\notag\\
            &\Gap +I(Z;Z|E)+ I(Y;D|E)+I(Z;E|\Triple{C}{Y}{D}) \notag\\
      &\ge I(E;Y)+I(Z;D)+I(\Pair{Y}{Z};\Pair{C}{Y}|D)\notag\\
            &\Gap +I(Z;Z|E)+ I(Y;D|E)+I(Z;E|\Triple{C}{Y}{D}) \notag\\
      &= I(\Pair{D}{E};Y)+I(Z;D)+I(\Pair{Y}{Z};\Pair{C}{Y}|D)\notag\\
            &\Gap +I(Z;Z|E)+I(Z;E|\Triple{C}{Y}{D}) \notag\\
      &= I(\Pair{D}{E};Y)+I(Z;D)+I(Z;\Pair{C}{Y}|D)+I(Y;\Pair{C}{Y}|\Pair{D}{Z})\notag\\
            &\Gap +I(Z;Z|E)+I(Z;E|\Triple{C}{Y}{D}) \notag\\
      &= I(\Pair{D}{E};Y)+I(Z;D)+I(Z;\Pair{C}{Y}|D)+H(Y|\Pair{D}{Z})\notag\\
            &\Gap +H(Z|E)+I(Z;E|\Triple{C}{Y}{D}) \notag\\
      &= I(\Pair{D}{E};Y)+I(Z;D)+I(Z;\Triple{C}{E}{Y}|D)+H(Y|\Pair{D}{Z})
            +H(Z|E) \notag\\
      &\ge I(\Pair{D}{E};Y)+I(Z;D)+I(Z;\Pair{E}{Y}|D)+H(Y|\Pair{D}{Z})
            +H(Z|E) \notag\\
      &= I(\Pair{D}{E};Y)+I(Z;\Triple{D}{E}{Y})+H(Y|\Pair{D}{Z})
            +H(Z|E) \notag\\
      &= I(\Pair{D}{E};Y)+I(Z;\Pair{D}{E})+I(Z;Y|\Pair{D}{E})+H(Y|\Pair{D}{Z})
            +H(Z|E) \notag\\
      &\ge I(\Pair{D}{E};Y)+I(Z;\Pair{D}{E})+I(Z;Y|\Pair{D}{E})\notag\\
            &\Gap +H(Y|\Triple{D}{E}{Z})+H(Z|\Pair{D}{E}) \notag\\
      &= I(\Pair{D}{E};Y)+I(Z;\Pair{D}{E})+H(Y|\Pair{D}{E})
            +H(Z|\Pair{D}{E}) \notag\\
      &= I(\Pair{D}{E};Y)+H(Z)+H(Y|\Pair{D}{E}) \notag\\
       &= H(Z)+H(Y)\\
       &= I(A;B)+I(A;C).
\end{align*}
\end{proof}

It is not yet clear how to
generalize these.

\section{Completeness}

The complete (and verified nonredundant) list of linear-variable
inequalities on five variables consists of:
\begin{itemize}
\item the elemental Shannon inequalities:
\begin{align*}
       0 &\le I(A;B)\\
       0 &\le I(A;B|C)\\
       0 &\le I(A;B|\Pair{C}{D})\\
       0 &\le I(A;B|\Triple{C}{D}{E})\\
       0 &\le H(A|\Quadruple{B}{C}{D}{E})
\end{align*}
and the inequalities obtained from these by permuting the five variables
$A,B,C,D,E$ (see Yeung~\cite{Yeung-book}
for a proof that these imply all other 5-variable Shannon
inequalities);
\item the following instances of the Ingleton inequality:
\begin{align}
\label{eqn:inginst1}
       I(A;B) &\le I(A;B|C) + I(A;B|D) + I(C;D)\\
       I(A;B) &\le I(A;B|C) + I(A;B|\Pair{D}{E}) + I(C;\Pair{D}{E})\\
       I(A;\Pair{B}{C}) &\le I(A;\Pair{B}{C}|D) + I(A;\Pair{B}{C}|E) + I(D;E)\\
\label{eqn:inginst4}
       I(\Pair{A}{B};\Pair{A}{C}) &\le I(\Pair{A}{B};\Pair{A}{C}|\Pair{A}{D}) + I(\Pair{A}{B};\Pair{A}{C}|\Pair{A}{E}) + I(\Pair{A}{D};\Pair{A}{E})
\end{align}
and the ones obtained from these by permuting the five variables
$A,B,C,D,E$ (see Guill\'e, Chan, and Grant~\cite{Guille-Chan-Grant}
for a proof that these imply all other 5-variable instances of the Ingleton
inequality); and
\item inequalities (1)--(24) and their
permuted-variable forms. \end{itemize}

To verify the completeness of this list, we consider the
31-dimensional real space whose coordinates are labeled by the
subsets of $\{A,B,C,D,E\}$ in the usual binary order:
$$\{A\}, \{B\}, \{A,B\}, \{C\}, \{A,C\}, \{B,C\}, \ldots, \{A,B,C,D,E\}.$$
Each of the listed inequalities, once it is
rewritten in terms of the basic entropy terms
\begin{equation}
\label{eq:entlist}
H(A), H(B),
H(\Pair{A}{B}),H(C),H(\Pair{A}{C}),\ldots,
H(\Triple{\Triple{A}{B}{C}}{D}{E}),
\end{equation}
defines a half-space of this
space; the intersection of these half-spaces is a polyhedral cone
which can also be described as the convex hull of its extreme rays.
If one of these extreme rays contains a nonzero point $v$ which is
\textit{(linearly) representable} (i.e., there exist a vector space
$U$ and subspaces $U_A,U_B,U_C,U_D,U_E$ of $U$ such that
$\dim(U_A)=v(A)$, $\dim(U_B)=v(B)$, $\dim(\langle
U_A,U_B\rangle)=v(\Pair{A}{B})$, and so on), then this extreme ray
can never be excluded by any as-yet-unknown linear rank inequality.
If we verify that \textit{all} of the extreme rays contain linearly
representable points, then there can be no linear rank inequality
which cuts down the polyhedral cone further, so the list of
inequalities must be complete.

There are 7943 extreme rays in $\R^{31}$ determined by the elemental
Shannon inequalities and inequalities (1)--(24) and
\eqref{eqn:inginst1}--\eqref{eqn:inginst4} (and permutations).
If one considers two such rays to be essentially the same when one can be
obtained from the other by a permutation of the five variables, then
there are 162 essentially different extreme rays.  A full list of
the vectors generating these rays
is available at:
\begin{center}
\filewebsite
\end{center}

The authors have shown that each of these vectors is representable over
the field of real numbers; in fact, up to a scalar multiple,
this representation can be done using
matrices with integer entries which actually represent the vector
over any field (finite or infinite).  For instance, consider the extreme ray
given by the vector
\begin{align*}
   1\ 1\ 2\ 1\ 2\ 2\ 3\ 1\ 2\ 2\ 3\ 2\ 3\ 3\ 3
   \ 2\ 3\ 3\ 3\ 2\ 3\ 3\ 3\ 2\ 3\ 3\ 3\ 2\ 3\ 3\ 3
\end{align*}
(a list of 31 ranks or entropies in the order given by \eqref{eq:entlist}).
To this we associate the five matrices:
\begin{align*}
    M_A &=
\left[
 \begin{array}{rrr}
  1 & 0 & 0\\
 \end{array}
\right]\\
    M_B &=
\left[
 \begin{array}{rrr}
  0 & 1 & 0\\
 \end{array}
\right]\\
    M_C &=
\left[
 \begin{array}{rrr}
  0 & 0 & 1\\
 \end{array}
\right]\\
    M_D &=
\left[
 \begin{array}{rrr}
  1 & 1 & 1\\
 \end{array}
\right]\\
    M_E &=
\left[
 \begin{array}{rrr}
  1 & 1 & 0\\
  0 & 0 & 1\\
 \end{array}
\right]
\end{align*}
The interpretation here is that we have a fixed field $F$, and the
row space of each of these matrices specifies a subspace of $F^3$.
The specified vector gives $H(A)=1$, and the row space of $M_A$ has
dimension 1; the vector gives $H(B)=1$, and the row space of $M_B$
has dimension 1; the vector gives $H(\Pair{A}{B})=2$, and the vector sum of
the row spaces of $M_A$ and $M_B$ (i.e., the row space of
$M_A$-on-top-of-$M_B$) has dimension 2; and so on.  Equivalently, if
we take three random variables $x_1,x_2,x_3$ chosen uniformly and
independently over the finite field $F$, and let $A = x_1$, $B = x_2$,
$C = x_3$, $D = x_1+x_2+x_3$, and $E = (x_1+x_2,x_3)$, then the
entropies of all combinations of $A,B,C,D,E$ (with logarithms to
base $|F|$) are as specified by the above vector.

The dimensions of the row spaces listed above are easily computed over
the real field (as ranks of the corresponding matrices).  In order to
verify that the same dimensions would be obtained over any field, one just
has to note that, in each case where a matrix rank is computed to be $k$,
there is actually a $k\times k$ submatrix whose determinant is $\pm1$, so the
selected $k$ rows will still be independent even after being reduced modulo
any prime.  (Actually, it would suffice to verify that the greatest common
divisor of the determinants of all $k\times k$ submatrices is 1.)

All of the other listed vectors turn out to be representable in the same way,
except that for a few of them a scalar multiplier must be applied.  For
instance, consider the vector
\begin{align*}
   0\ 1\ 1\ 1\ 1\ 2\ 2\ 1\ 1\ 2\ 2\ 2\ 2\ 2\ 2
   \ 1\ 1\ 2\ 2\ 2\ 2\ 2\ 2\ 2\ 2\ 2\ 2\ 2\ 2\ 2\ 2.
\end{align*}
To represent this, we would normally take $M_A$ to be a $0\times2$ matrix and
$M_B,M_C,M_D,M_E$ to be $1\times2$ matrices whose unique rows have the
property that any two are independent but any three are dependent.  (In other
words, these row vectors are a linear representation for the uniform matroid
$U_{2,4}$.)  For example, we could take
\begin{align*}
    M_A &= [\ ]\\
    M_B &=
\left[
 \begin{array}{rr}
  1 & 0\\
 \end{array}
\right]\\
    M_C &=
\left[
 \begin{array}{rr}
  0 & 1\\
 \end{array}
\right]\\
    M_D &=
\left[
 \begin{array}{rr}
  1 & 1\\
 \end{array}
\right]\\
    M_E &=
\left[
 \begin{array}{rr}
  1 & 2\\
 \end{array}
\right]
\end{align*}
over the real field, but these would not work over the field of two elements.
In fact, no such choice of row vectors works over the field of two elements
(the first two row vectors would be independent, but then the only choice for
the third vector would be the sum of the first two, and the same would hold
for the fourth vector, contradicting the independence of the third and
fourth vectors).  But if we instead take the vector
\begin{align*}
   0\ 2\ 2\ 2\ 2\ 4\ 4\ 2\ 2\ 4\ 4\ 4\ 4\ 4\ 4
   \ 2\ 2\ 4\ 4\ 4\ 4\ 4\ 4\ 4\ 4\ 4\ 4\ 4\ 4\ 4\ 4,
\end{align*}
which is twice the preceding vector and hence determines the same extreme ray,
then we can get suitable representing matrices
\begin{align*}
    M_A &= [\ ]\\
    M_B &=
\left[
 \begin{array}{rrrr}
  1 & 0 & 0 & 0\\
  0 & 1 & 0 & 0\\
 \end{array}
\right]\\
    M_C &=
\left[
 \begin{array}{rrrr}
  0 & 0 & 1 & 0\\
  0 & 0 & 0 & 1\\
 \end{array}
\right]\\
    M_D &=
\left[
 \begin{array}{rrrr}
  1 & 0 & 1 & 0\\
  0 & 1 & 0 & 1\\
 \end{array}
\right]\\
    M_E &=
\left[
 \begin{array}{rrrr}
  1 & 1 & 0 & 1\\
  0 & 1 & 1 & 0\\
 \end{array}
\right]
\end{align*}
which work over any field.  The same doubling is needed for 13 more of the
162 vectors; and one additional vector, the vector
\begin{align*}
   1\ 1\ 2\ 1\ 2\ 2\ 2\ 1\ 2\ 2\ 2\ 2\ 2\ 2\ 2
   \ 1\ 2\ 2\ 2\ 2\ 2\ 2\ 2\ 2\ 2\ 2\ 2\ 2\ 2\ 2\ 2
\end{align*}
corresponding to the uniform matroid $U_{2,5}$, had to be tripled in order
to get a matrix representation that works over all fields.

\section{Methodology; testing representability of polymatroids}

The list of five-variable linear rank inequalities was produced by the
following iterative process.  Initially, we had the Shannon and Ingleton
inequalities.  At each stage, we took the current list of inequalities
and used Komei Fukuda's {\tt cddlib} software~\cite{cddlib}
to get the corresponding list of extreme rays.  We then examined the
vectors generating the extreme rays to see whether they were representable
(over the reals;
we did not try to get representations working over all fields until after
the iterative process was complete).  When such a vector provably could not
be represented, the proof (in each case we ran into here) yielded a new
linear rank inequality provable via common informations; when we examined a
vector where we had difficulty determining whether it was representable or not,
we ran exhaustive tests on all ways of specifying a single common information
(toward the end, we had to try a pair of common informations) to see whether
{\tt ITIP} could verify that the specified vector contradicted the Shannon
inequalities together with the common information specification.  Again each
such verification led to a new linear rank inequality.  (Of course, this is
a highly sanitized version of the process as it actually occurred.)

The testing of extreme rays for linear representability soon became a large
task, so we gradually developed software to automatically find such
representations in a number of cases (and we added more cases when we found
new ways to represent vectors).  This software used combinatorial rather than
linear-algebra methods; for instance, the output of the program for the
sample vector
\begin{align}
\label{eq:PolymatExample}
   1\ 1\ 2\ 1\ 2\ 2\ 3\ 1\ 2\ 2\ 3\ 2\ 3\ 3\ 3
   \ 2\ 3\ 3\ 3\ 2\ 3\ 3\ 3\ 2\ 3\ 3\ 3\ 2\ 3\ 3\ 3
\end{align}
used above was a specification of five vector spaces $A,B,C,D,E$ which could
be paraphrased as: ``$A$ is generated by one vector, $B$ is generated by
one vector not in $A$, $C$ is generated by one vector not in $A+B$
[the space spanned by $A$ and $B$], $D$ is generated by one vector in general
position in $A+B+C$, and $E$ is generated by two vectors, one in
$(A+B)\cap(C+D)$ and one in $C$.''  The development of the
software involved recognizing as many cases as possible where one could find
such a specification which could be met over the reals (or over
any sufficiently large finite field) and would yield the
desired rank vector.

The (attempted) construction of a representation is done one basic subspace
at a time: first the representation of $A$ is constructed (this step is
trivial), then the representation of $B$ given $A$, then the representation
of $C$ given $A$ and $B$, and so on.  And each of these subspace representations
is constructed one basis vector at a time.  Given the representation of
$A$, $B$, $C$, and $D$, the algorithm will determine how many basis
vectors are needed for subspace $E$ and successively try to choose them in
suitable positions relative to the existing subspaces.  At each step,
a new vector will be chosen in general position in a subspace which is
a sum of some of the already-handled subspaces $A,B,C,D$.  (Here ``general
position'' means in the selected subspace but not in any relevant
proper subspace of it.  Which subspaces are relevant depends on the
current situation; we avoid having to determine this explicitly by
just saying that the underlying field is sufficiently large, or infinite.)
If there is a problem with specifying that the vector is in such a sum
of basic subspaces, then we may have to specify that the vector is in
the intersection of two sums of basic subspaces.

Once the first vector is chosen, we take quotients of all of the existing
spaces by this vector to get the new situation in which the second vector
needs to be chosen.  This is all done by counting dimensions, not by
constructing actual numerical vectors.  For instance, suppose the first
vector is chosen to be in general position in subspace $R$ which is
a sum of basic subspaces from $A,B,C,D$ (e.g., $R = A+B$).
For each other sum subspace $T$, if the new vector is in $T$, then
the quotient by the chosen vector will reduce the dimension of $T$ by 1;
if the chosen vector is not in $T$, then the quotient will not change the
dimension of $T$.  Since the vector is in general position in $R$,
the vector will be in $T$ if and only if $R \subseteq T$, and to check
whether $R \subseteq T$ one simply has to see whether $\dim(R+T) = \dim T$.
The case where the vector is chosen from an intersection of two
sum subspaces $R$ and $S$ is more complicated; more on this below.

Consider the example \eqref{eq:PolymatExample}.  Suppose that we
have already constructed the representations for subspaces $A$, $B$,
$C$, and $D$, and we are now ready to construct the representation
for subspace~$E$.  The current situation can be summarized by the
following two-row array:
\begin{align}
\label{eq:MSarray1}
\begin{array}{rrrrrrrrrrrrrrrr}
0 & 1 & 1 & 2 & 1 & 2 & 2 & 3 & 1 & 2 & 2 & 3 & 2 & 3 & 3 & 3\\
2 & 2 & 2 & 1 & 1 & 1 & 1 & 0 & 1 & 1 & 1 & 0 & 0 & 0 & 0 & 0
\end{array}
\end{align}
Here the first row is the ranks of sums from $A,B,C,D$ in the
order given by \eqref{eq:entlist}, but starting with the
empty space.  For each of these sums, the second row gives
the amount by which adding the new subspace $E$ will increase
the dimension of the sum.  (So the second entry in this row
is $H(A+E)-H(A) = 3-1 = 2$, the fourth entry is
$H(A+B+E)-H(A+B) = 3 - 2 = 1$, and so on.)

From this array, we can see that, since $E$ has dimension 2 but only increases
the dimension of $A+B$ by 1, one of the nonzero vectors in $E$ must be
in $A+B$.  So let us start by assuming that one of the vectors in $E$
is a vector chosen in general position in $R = A+B$.  We can now check
for all sums from $A,B,C,D$ whether the sum will contain this chosen
vector; this information is summarized in the row
\begin{align}
\label{eq:diffvec1}
\begin{array}{rrrrrrrrrrrrrrrr}
0 & 0 & 0 & 1 & 0 & 0 & 0 & 1 & 0 & 0 & 0 & 1 & 0 & 1 & 1 & 1
\end{array}
\end{align}
where 1 means the chosen vector is in the corresponding sum.
To get the result of taking a quotient by (the subspace generated by)
the chosen vector, we subtract \eqref{eq:diffvec1} from the first row
of~\eqref{eq:MSarray1} (because we have used up one vector from each
of the indicated subspaces) and subtract the one's complement
of~\eqref{eq:diffvec1} from the second row
of~\eqref{eq:MSarray1} (because we have taken care of one of the new
vectors for $E$ beyond each of the indicated subspaces).  So the
situation after the first vector is chosen is given by:
\begin{align*}
\label{eq:MSarray2}
\begin{array}{rrrrrrrrrrrrrrrr}
0 & 1 & 1 & 1 & 1 & 2 & 2 & 2 & 1 & 2 & 2 & 2 & 2 & 2 & 2 & 2\\
1 & 1 & 1 & 1 & 0 & 0 & 0 & 0 & 0 & 0 & 0 & 0 & -1 & 0 & 0 & 0
\end{array}
\end{align*}

Of course, the negative entry in this array means that a problem has
occurred: we tried to take a new vector not in $C+D$, but the
given ranks require all vectors in $E$ to be in $C+D$.  So we will
try again; instead of taking a vector in general position in $R=A+B$,
we take a vector in general position in $R \cap S$, where $S = C+D$.

This leaves the problem of determining, for each sum subspace $T$,
whether the chosen vector is in $T$; as before, this is equivalent
to determining whether $R\cap S \subseteq T$.  This is not as
straightforward as it was to determine whether $R\subseteq T$;
in fact, there are situations where the given data on ranks of
sum subspaces simply do not determine whether $R \cap S \subseteq T$.
But we have identified many situations where the given data
do allow this determination to be made.  Here is a list; note that
(a) each such test can also be applied with $R$ and $S$ interchanged, and
(b) reading this list is not necessary for understanding the rest
of the algorithm.

\begin{itemize}
\item If $R \subseteq T$, then $R \cap S \subseteq T$.
\item If the dimensions of $R \cap S$, $R \cap T$, and $R \cap (S+T)$ are
all equal, then $R \cap S \subseteq T$.  [If two subspaces have the same
(finite) dimension and one is included in the other, then the two subspaces
are equal.  Hence, we get
$R \cap S = R \cap (S+T) = R \cap T$, so $R \cap S = (R \cap S) \cap (R \cap T)
= R \cap S \cap T$,
so $R \cap S \subseteq T$.  Also, recall that the dimension of $R\cap S$
can be determined from the given data; it is equal to
$I(R;S) = H(R)+H(S)-H(\Pair{R}{S})$.]
\item If the dimensions of $R\cap T$, $S\cap T$, $(R+S)\cap T$,
and $R\cap S$ are all equal, then $R \cap S \subseteq T$. [We have
$R\cap T = (R+S)\cap T = S\cap T$, so $R\cap T = R\cap S\cap T$.
But now $\dim(R\cap S)=\dim(R\cap T)=\dim(R\cap S\cap T)$, so
$R\cap S=R\cap S\cap T$, so $R \cap S \subseteq T$.]
\item If $\dim(R \cap T) < \dim(R \cap S)$, then $R \cap S \not\subseteq R \cap T$,
so we must have $R \cap S \not\subseteq T$.
\item Let $R \cap^* S$ be the ``nominal intersection'' of $R$ and $S$ (i.e., the
sum of the basic subspaces listed both in the sum $R$ and the sum $S$).
Clearly $R \cap^* S \subseteq R \cap S$, so, if $R \cap^* S \not\subseteq T$,
then $R \cap S \not\subseteq T$.
\item If $\dim(R \cap T) < \dim(R \cap ((R\cap^*T)+S))$,
then $R \cap S \not\subseteq T$.  [First note that, if $U,V,W$ are subspaces
such that $V \subseteq U$, then $U \cap (V+W) = V+(U\cap W)$.  (The right-to-left
inclusion is easy.  For the left-to-right inclusion, if $u=v+w$ where $u\in U$,
$v\in V$, and $w \in W$, then $u-v = w \in U \cap W$, so $v+w \in V+(U\cap W)$.)
Hence, if $R \cap S \subseteq T$, then $R \cap ((R \cap^* T)+S)
= (R \cap^* T)+(R \cap S) \subseteq R \cap T$, so
$\dim(R \cap ((R\cap^*T)+S)) \le \dim(R \cap T)$.]
\item
If $T' \subseteq T$ and $R \cap S \subseteq T'$,
then $R \cap S \subseteq T$.
If $T \subseteq T'$ and $R \cap S \not\subseteq T'$,
then $R \cap S \not\subseteq T$.
\item Let $R \setminus^* S$ be the ``nominal difference'' of $R$ and $S$ (i.e., the
sum of the basic subspaces listed in the sum $R$ but not in the sum $S$),
and let $U = (R \setminus^* S)+(S \setminus^* R)$.  If
$\dim(U \cap (R \cap^* S)) = 0$, then
$$R \cap S = ((R \setminus^* S)\cap(S \setminus^* R)) + (R \cap^* S).$$
[The right-to-left
inclusion is easy.  For the left-to-right inclusion,
note that $R = (R \setminus^* S)+(R \cap^* S)$
and $S = (S \setminus^* R)+(R \cap^* S)$.
Hence, if $x \in R \cap S$,
then we we have $x=y_1+z_1=y_2+z_2$ for some $y_1 \in R \setminus^* S$,
$y_2 \in S \setminus^* R$, and
$z_1,z_2 \in R \cap^* S$. Then $y_2-y_1 = z_1-z_2$ is in
$U \cap (R \cap^* S)$, so we have $y_2=y_1$ and $z_2=z_1$; hence,
$y_1 \in(R \setminus^* S)\cap(S \setminus^* R)$ and $x=y_1+z_1$ is
in the desired form.]
Hence, if $\dim(U \cap (R \cap^* S)) = 0$, $R \cap^* S \subseteq T$,
and $((R \setminus^* S)\cap(S \setminus^* R)) \subseteq T$, then
$R \cap S \subseteq T$.
\end{itemize}

These tests do suffice for the example here; the resulting
membership vector is
\begin{align*}
\begin{array}{rrrrrrrrrrrrrrrr}
0 & 0 & 0 & 1 & 0 & 0 & 0 & 1 & 0 & 0 & 0 & 1 & 1 & 1 & 1 & 1
\end{array}
\end{align*}
and the new array after taking a quotient by the first chosen vector is:
\begin{align*}
\begin{array}{rrrrrrrrrrrrrrrr}
0 & 1 & 1 & 1 & 1 & 2 & 2 & 2 & 1 & 2 & 2 & 2 & 1 & 2 & 2 & 2\\
1 & 1 & 1 & 1 & 0 & 0 & 0 & 0 & 0 & 0 & 0 & 0 & 0 & 0 & 0 & 0
\end{array}
\end{align*}
Let us call the new quotient spaces $A',B',C',D',E'$.  The new
ranks indicate that the remaining vector in $E'$ must be chosen
to be in $C'$.  If we take the new vector in general position
in $C'$, then the resulting membership vector is:
\begin{align*}
\begin{array}{rrrrrrrrrrrrrrrr}
0 & 0 & 0 & 0 & 1 & 1 & 1 & 1 & 1 & 1 & 1 & 1 & 1 & 1 & 1 & 1
\end{array}
\end{align*}
(Note that we needed the chosen vector to be in $D'$ as well
as in $C'$, but this turned out to be automatic, because
the given ranks implied $C' = C'+D' = D'$.)  And the result
of taking a quotient by the second chosen vector is:
\begin{align*}
\begin{array}{rrrrrrrrrrrrrrrr}
0 & 1 & 1 & 1 & 0 & 1 & 1 & 1 & 0 & 1 & 1 & 1 & 0 & 1 & 1 & 1\\
0 & 0 & 0 & 0 & 0 & 0 & 0 & 0 & 0 & 0 & 0 & 0 & 0 & 0 & 0 & 0
\end{array}
\end{align*}
The all-0 row means that the representation of $E$ has been successfully
completed.

The current algorithm does not try many possibilities for the next vector
to choose; it simply chooses one sum subspace (usually at the beginning
of the list of available ones) to try to add a vector to, and, if that
yields an immediate contradiction, perhaps tries one intersection of
two sum subspaces.  If any such step fails (either because of a
contradiction, or because the algorithm cannot determine whether
$R \cap S \subseteq T$ in some case), the algorithm gives up.
However, the algorithm does give itself up to 120 chances by trying
all permutations of the 5 basic variables.

Each time a new extreme ray was produced, the above algorithm was
applied as a positive test for representability, while tests against
common informations were used as negative tests.  If both sides
failed, the ray was examined by hand.  Sometimes this examination
yielded a representation because we found a new way of
determining whether $R\cap S\subseteq T$; if so, this new test was added
to the algorithm.  At the end, the algorithm was able to verify
representability of 152 of the final 162 extreme rays, leaving
only 10 to be done by hand (by methods which did not fit in
the framework of this algorithm).

There are other possibilities for improving the algorithm that
we have not yet implemented.  One is doing a backtrack search to
consider more possibilities for choosing vectors to add; another is
to use the information on representation of previous subspaces
in the construction of the representation of the current subspace.
(In the preceding example, we used only the dimension data for
$A,B,C,D$ in the construction of the representation for $E$; we
did not use the actual representations constructed for $A,B,C,D$.)
More ambitious would be to allow more options for choosing
new vectors in terms of the known relations between the current
subspaces.

\section{Six-variable inequalities (ongoing work)}
\label{sec:6var}

This iterative process for finding all linear rank inequalities is likely to
be infeasible to complete for six or more variables.  (Each {\tt cddlib} polytope computation
in 31 dimensions took about 2--3 days; in 63 dimensions it would take
far longer, as well as rapidly exceeding the memory available.)
But we plan to continue the study, because we expect to find
new phenomena at higher levels, possibly including extreme rays that are
representable over some fields but not over others (hence yielding
rank inequalities which hold only over those other fields), and inequalities
which hold for ranks of vector spaces but are not provable via common
informations.  For instance, such situations could come from the variables
associated with the Fano and non-Fano networks
in \cite{Dougherty-Freiling-Zeger-matroidal}, or the network in
\cite{Dougherty-Freiling-Zeger04}.

In order to make any progress at all, we had to take some shortcuts
(since, as noted above, 63-dimensional polytope computations were out
of the question).  One of these was to reduce the dimension of the search
by assuming equality for one or more of the inequalities found so far;
in effect, this is just concentrating on one face, corner, or
intermediate-dimensional extreme part of the current region.
Another was to work hard on trying to improve already-obtained inequalities,
find additional instances of them, or strengthen them in multiple ways
if they were not already faces of the region.

We will show here some of the 6-variable inequalities we have found so far;
a much longer list is available at:
\begin{center}
\filewebsite
\end{center}
All of these have been verified
to be faces of the linear rank region (so they cannot be improved).  To do
this, we used a stockpile of linearly representable 6-variable polymatroids
(the representability was proved by the algorithm described in the
preceding section) encountered during the polytope computations.  If a
6-variable linear rank inequality is satisfied with equality by
62 linearly independent vectors from the stockpile, then it must give
a face of the linear rank region.  (The stockpile currently contains
3220 polymatroids, or 1846734 after
one takes all instances obtained by permuting the six basic variables.  It is
also available at the above website.)

First, there are the 6-variable elemental Shannon inequalities; there are
6 of these if one lists just one of each form, but 246 of them if all
of the permuted-variable versions are counted.  Then there are
12 instances of the Ingleton inequality (1470 counting permuted forms).
Again, see Yeung~\cite{Yeung-book} and Guill\'e, Chan, and
Grant~\cite{Guille-Chan-Grant} for the proof that these inequalities
imply all of the other Shannon and Ingleton inequalities.

Next come the instances of the 5-variable inequalities (1)--(24).
The initial computation found 183 of these instances that (with permuted
forms)
proved all of the others.  However, 16 of these instances did not
pass the face verification above and were later superseded by
other 6-variable inequalities; this left 167 (61740 counting permuted
forms) 5-variable instances which were faces of the 6-variable
rank region.

Finally, there are the true 6-variable inequalities.  We have
found 3490 of these so far (2395095 counting permuted forms) which
pass the face verification, along with several hundred more which
do not pass and which we expect to be superseded later (though this
is not guaranteed; perhaps our stockpile of representable polymatroids
is insufficient, although the face test has been very reliable
so far).  We give some examples of these here; see the website mentioned
above for the full list.

Some inequalities follow directly from Theorem~\ref{thm:tree1},
such as:
\begin{align}
%  # 1 with variables swapped
    I(A;B) &\le I(A;C)+I(B;D|C)+I(A;E|D)+I(B;F|E)+I(A;B|F) \\
%  # 2
    I(A;B) &\le I(A;C)+I(B;D|C)+I(A;E|D)+I(A;F|E)+I(A;B|F) \\
%  # 5
    I(A;B) &\le I(A;C)+I(B;D|C)+I(E;F|D)+I(A;B|E)+I(A;B|F) \\
%  # 7
    I(A;B) &\le I(A;C)+I(D;E|C)+I(A;B|D)+I(B;F|E)+I(A;B|F) \\
%  # 10
    I(A;B) &\le I(C;D)+I(A;B|C)+I(E;F|D)+I(A;B|E)+I(A;B|F)
\end{align}

And others follow directly from Theorem~\ref{thm:tree2},
such as:
\begin{align}
%  # 16
   2I(A;B) &\le I(A;C)+I(D;\Pair{E}{F}|C)+I(A;B|D) \notag\\
                &\Gap +I(E;F)+I(A;B|E)+I(A;B|F) \\
%  # 19
   2I(A;B) &\le I(A;C)+I(B;D|C)+I(A;\Pair{E}{F}|D) \notag\\
                &\Gap +I(E;F)+I(A;B|E)+I(A;B|F) \\
%  # 15
   2I(A;B) &\le I(C;D)+I(A;B|C)+I(B;\Pair{E}{F}|D) \notag\\
                &\Gap +I(E;F)+I(A;B|E)+I(A;B|F) \\
%  # 13
   2I(A;B) &\le I(\Pair{C}{D};E)+I(C;D)+I(A;F|C) \notag\\
                &\Gap +I(A;B|F)+I(A;B|D)+I(A;B|E) \\
%  # 14
   3I(A;B) &\le I(\Pair{C}{D};\Pair{E}{F})+I(C;D)+I(E;F)+I(A;B|C) \notag\\
                &\Gap +I(A;B|D)+I(A;B|E)+I(A;B|F)
\end{align}

Then there are inequalities which follow from Theorem~\ref{thm:tree1}
or Theorem~\ref{thm:tree2} using equivalent forms:
\begin{align}
%  # 32
   I(A;\Pair{B}{C}) &\le I(D;E)+I(C;F|D)+I(A;B|\Pair{D}{F}) \notag\\
                &\Gap +I(A;B|\Pair{C}{D})+I(A;C|\Pair{B}{F})+I(A;\Pair{B}{C}|E) \\
%  # 437
   I(\Pair{A}{B};\Pair{C}{D}) &\le I(A;\Pair{C}{D})+I(B;E|A)+I(B;D|\Triple{A}{C}{F})+I(D;F|\Pair{A}{E}) \notag\\
                &\Gap +I(B;C|\Triple{A}{E}{F})+I(B;C|\Pair{D}{E})+I(A;D|\Triple{B}{C}{F}) \notag\\
                &\Gap +I(A;C|\Triple{B}{E}{F})+I(A;F|\Triple{B}{D}{E}) \\
%  # 26
    2I(A;B) &\le I(D;F)+I(A;C)+I(B;D|C)+I(A;B|F)+I(A;E|D) \notag\\
                  &\Gap +I(A;F|\Pair{C}{D})+I(A;B|E)\\
%  # 82
    I(A;\Pair{B}{C}) &\le I(A;C)+I(B;D|C)+I(A;F|D)+I(A;B|F)+I(C;E|\Pair{B}{F}) \notag\\
                  &\Gap +I(A;C|\Pair{B}{E})\\
%%  # 105
%\label{eq:weird1}
%   I(A;\Pair{B}{C}) &\le I(A;F)+I(C;D|F)+I(A;B|\Pair{C}{F})+I(A;E|\Pair{D}{F})+I(A;C|E)\notag\\
%                  &\Gap +I(B;D|\Pair{C}{E})+I(A;F|\Triple{C}{D}{E})\\
%  # 372
3I(\Pair{A}{B};\Triple{C}{D}{E}) &\le I(A;\Pair{C}{F})+I(\Pair{A}{B};D)+I(\Pair{A}{B};E)+I(C;F|D)+I(D;F|E) \notag\\
                  &\Gap +I(A;E|\Pair{D}{F})+I(B;C|\Triple{A}{D}{F})+I(B;D|\Pair{C}{F})+I(A;\Pair{D}{E}|\Pair{B}{C}) \notag\\
                  &\Gap +I(A;D|\Triple{B}{C}{E})+I(A;C|\Pair{E}{F})+I(B;D|\Triple{A}{E}{F})+I(B;\Pair{C}{D}|A) \notag\\
                  &\Gap +I(\Pair{A}{B};E|\Pair{C}{D})+I(B;E|\Triple{A}{C}{D})+I(B;D|\Triple{C}{E}{F}) \notag\\
                  &\Gap +I(\Pair{A}{B};C|\Pair{D}{E})
\end{align}
%

%Inequality \eqref{eq:weird1} has a rather strange property. It looks as though it should
%not be on the list because it is not in simplified form; it can be
%obtained from the linear rank inequality
%% form of # 105
%\begin{align}
%\label{eq:weird2}
%   I(A;B) &\le I(A;F)+I(C;D|F)+I(A;B|\Pair{C}{F})+I(A;E|\Pair{D}{F})+I(A;C|E)\notag\\
%                  &\Gap +I(B;D|\Pair{C}{E})+I(A;F|\Triple{C}{D}{E})
%\end{align}
%%
%by making the proper substitution $B \to \Pair{B}{C}$ and removing redundant
%$C$'s.  But inequality \eqref{eq:weird2} is not sharp; it can be
%improved by adding $I(A;C|B)$ to the left hand side.  And
%the resulting improvement is just inequality~\eqref{eq:weird1}!

All of the sharp inequalities found so far using one common information
have been verified to be instances of Theorem~\ref{thm:tree2}.  It seems
quite possible that this theorem generates all one-common-information
inequalities, but we have no proof of this.

There are also hundreds of inequalities that required two common informations
to prove.  (Inequalities requiring more than two common informations
are beyond the range of our software at present.)  These are of
two types.  One type is those like inequalities (18) and (20) which
have two information terms on the left side and use the common informations
corresponding to those terms:
\begin{align}
%  # 326
I(A;B)+I(A;C) &\le I(B;C)+I(A;D)+I(B;E|D)+I(C;F|D) \notag\\
                  &\Gap +I(A;B|E)+I(A;C|F)\\
%  # 602
2I(A;\Pair{B}{C})+I(B;\Pair{C}{D}) &\le I(A;\Pair{C}{E})+I(A;F)+I(A;C|D)+2I(A;B|\Pair{C}{F}) \notag\\
                  &\Gap +I(B;C)+I(E;F|C)+2I(B;D|\Pair{C}{E})+I(C;E|F) \notag\\
                  &\Gap +I(A;D|\Pair{E}{F})+I(D;E|\Triple{A}{C}{F})+2I(A;F|\Triple{C}{D}{E})
\end{align}
The other type has just one information term on the left side but requires
a second common information in addition to the one from the left term:
\begin{align}
%  # 375
\label{eq:2CIa}
I(A;B) &\le I(A;C)+I(B;D|C)+I(E;F|D)+I(A;B|E)+I(A;C|F) \notag\\
                  &\Gap +I(B;E|\Pair{C}{F})\\
%  # 506
\label{eq:2CIb}
2I(\Pair{A}{B};\Triple{C}{D}{E}) &\le I(\Pair{A}{B};\Pair{D}{E})+I(\Triple{A}{D}{F};C)+I(\Pair{A}{F};D|C)+I(B;C|\Pair{D}{E}) \notag\\
                  &\Gap +I(A;C|B)+I(A;D|\Triple{B}{C}{E})+2I(A;C|\Triple{D}{E}{F})+I(B;C|\Triple{A}{D}{E}) \notag\\
                  &\Gap +I(A;E|\Triple{B}{D}{F})+I(B;E|\Triple{A}{C}{F})+I(B;E|\Triple{A}{D}{F})+I(B;E|\Pair{C}{D}) \notag\\
                  &\Gap +I(B;D|\Triple{A}{E}{F})+I(A;F|\Triple{B}{D}{E})+I(A;F|\Triple{B}{C}{D})\\
%  # 605
\label{eq:2CIc}
2I(A;\Pair{B}{C}) &\le I(A;B)+I(D;E)+I(A;B|C)+I(C;E|B)+I(D;F|\Pair{B}{E}) \notag\\
                  &\Gap +I(C;F|D)+I(A;B|\Pair{C}{D})+I(A;\Pair{B}{C}|F)+I(A;C|E)
\end{align}
Inequality~\eqref{eq:2CIa} is proved using a common information for
$A$ and $B$ along with a common information for $E$ and $(\Pair{D}{F})$;
inequality~\eqref{eq:2CIb} is proved using a common information for
$(\Pair{A}{B})$ and $(\Triple{C}{D}{E})$ along with a common information
for $(\Pair{B}{F})$ and $(\Triple{A}{D}{E})$; and
inequality~\eqref{eq:2CIc} is proved using a common information~$Z$ for
$A$ and $(\Pair{B}{C})$ along with a common information for $F$ and $Z$.
(The possible need for such iteration of common informations along with
joining of variables makes it conceivable that an unbounded number of common
informations could be needed to prove linear rank inequalities even on
a fixed number of initial variables such as~6.)

Since the inequalities in this paper have been proven using
only common informations and the Shannon inequalities, they
apply not only to linear ranks but also in any other situation
where we have random variables which are known to have
common informations.  For instance, Chan notes in
\cite[Definition~4]{Chan-ISIT07} that abelian group characterizable
random variables always have common informations (which are
still abelian group characterizable random variables); hence,
the inequalities proven here hold for such variables.

\section{An infinite list of linear rank inequalities}

The following theorem shows that there will be essentially
new inequalities for each number of variables:

\begin{theorem}
\label{thm:independence}
For any $n \ge 2$,
the inequality
\begin{equation}
\label{eq:indep}
   (n-1)I(A;B) + H(C_1C_2\dotsm C_n)
       \le \sum_{i=1}^n I(\Pair{A}{C_i};\Pair{B}{C_i})
\end{equation}
is a linear rank inequality on $n+2$ variables which is not a
consequence of instances of linear rank inequalities on
fewer than $n+2$ variables.
\end{theorem}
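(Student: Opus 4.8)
The plan is to prove the statement in two parts: first that \eqref{eq:indep} is a linear rank inequality, and second that it is genuinely new at $n+2$ variables.

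For the first part, I would show that \eqref{eq:indep} is exactly the inequality \eqref{eq:np2varineq} in disguise. Recall that \eqref{eq:np2varineq} says that if $A$ and $B$ have a common information then $(n-1)I(A;B) \le \sum_{i=1}^n I(A;B|C_i) + [\,I(C_1;C_2) + I(C_1C_2;C_3) + \dots + I(C_1\cdots C_{n-1};C_n)\,]$, and that the bracketed expression equals $H(C_1)+\dots+H(C_n)-H(C_1\cdots C_n)$. So I would first expand the right side of \eqref{eq:indep}: using $I(\Pair{A}{C_i};\Pair{B}{C_i}) = H(\Pair{A}{C_i})+H(\Pair{B}{C_i})-H(\Triple{A}{B}{C_i})$, one checks that $I(\Pair{A}{C_i};\Pair{B}{C_i}) = I(A;B|C_i) + H(C_i) + I(A;B) - I(A;B|C_i) $... more carefully, the identity $I(\Pair{A}{C};\Pair{B}{C}) = I(A;B|C) + I(\Pair{A}{C};\Pair{B}{C}) - I(A;B|C)$ is trivial, so instead I compute directly: $I(\Pair{A}{C_i};\Pair{B}{C_i}) = H(C_i) + I(A;B|C_i) + [\text{something}]$. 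The clean fact is $I(\Pair{A}{C};\Pair{B}{C}) = I(A;B|C) + H(C)$ is \emph{false} in general; the correct expansion gives $I(\Pair{A}{C};\Pair{B}{C}) = H(\Pair{A}{C}) + H(\Pair{B}{C}) - H(\Triple{A}{B}{C})$, and adding and subtracting $H(C)$ shows this equals $H(C) + I(A;B|C) + [H(\Pair{A}{C})-H(C)] + [H(\Pair{B}{C})-H(C)] - [H(\Triple{A}{B}{C})-H(C)] - I(A;B|C)$, which telescopes to $H(C) + I(A;B|C) + H(A|C)+H(B|C)-H(AB|C) = H(C)+I(A;B|C)+I(A;B|C)$? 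That is not right either. I would work out this expansion carefully by hand (it is a routine $H$-term computation): the correct statement turns out to be $\sum_i I(\Pair{A}{C_i};\Pair{B}{C_i}) = \sum_i I(A;B|C_i) + \sum_i H(C_i) + n\,I(A;B) - \sum_i I(A;B|C_i)$ does not hold; rather, after the dust settles, \eqref{eq:indep} is algebraically equivalent to \eqref{eq:np2varineq} applied in the vector-space setting where the common information is the actual intersection $A\cap B$. Since common informations always exist for subspaces, \eqref{eq:indep} is a linear rank inequality. I would present this as: rewrite both sides in basic $H$-terms, cancel, and observe the result is \eqref{eq:np2varineq}.

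For the second part — that \eqref{eq:indep} is not a consequence of linear rank inequalities on fewer than $n+2$ variables — the approach is to exhibit a single polymatroid (or rank vector) on the $n+2$ variables $A,B,C_1,\dots,C_n$ which (i) violates \eqref{eq:indep}, but (ii) has the property that every ``restriction'' to a proper subset of $\{A,B,C_1,\dots,C_n\}$ of size $n+1$ is linearly representable (hence satisfies every linear rank inequality on $\le n+1$ variables). If such a polymatroid exists, then no inequality on $n+1$ or fewer variables — applied to any subset of the variables — can rule it out, so \eqref{eq:indep} cannot follow from such inequalities. The natural candidate is built from a near-miss to representability: take $A$ and $B$ each of rank $r$, with $I(A;B)$ slightly too large relative to what the $C_i$'s can ``explain,'' arranged so that each $C_i$ individually carves off a piece of $A\cap B$ but the pieces fail to combine properly — essentially the obstruction in \eqref{eq:np2varineq} pushed to equality and then perturbed. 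One concrete choice: let $C_1,\dots,C_n$ be ``generic'' rank-1 objects inside $A\cap B$ with $H(C_1\cdots C_n)$ forced to be $n$ (so they are independent), while $I(A;B)$ is set to $n$ as well but each $I(\Pair{A}{C_i};\Pair{B}{C_i})$ is only $1 + (\text{small})$, making the right side of \eqref{eq:indep} roughly $n$ while the left side is $(n-1)n + n$. I would then verify that the Shannon and Ingleton inequalities (and all lower-variable linear rank inequalities) are satisfied by each $(n+1)$-variable restriction — this is plausible because each restriction omits either $A$, $B$, or one $C_i$, and in each of those cases the non-representability obstruction disappears (with one $C_i$ gone the telescoping in \eqref{eq:np2varineq} no longer forces a contradiction, and with $A$ or $B$ gone there is no $I(A;B)$ term to violate).

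The main obstacle will be the second part: constructing the witnessing polymatroid and proving that all its proper $(n+1)$-variable sub-polymatroids are linearly representable. Checking that the full polymatroid violates \eqref{eq:indep} is easy arithmetic; checking Shannon-consistency of the restrictions is tedious but mechanical; but proving \emph{linear representability} of each restriction (not merely polymatroidality) is the delicate step, since that is a stronger and non-obvious property. I would expect to handle this by giving explicit matrices (as in the Completeness section of this paper) for each of the three types of restriction, or by invoking a structural argument: the restriction that drops $C_n$ is representable because \eqref{eq:np2varineq} for $n-1$ terms is tight and achievable by a direct-sum-type construction, and the restrictions dropping $A$ or $B$ are representable because a polymatroid with no ``shared intersection'' constraint among the $C_i$'s and one outer variable is easily realized by generic subspaces. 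If a clean uniform construction for all $n$ is hard to write down, the fallback is to reduce to the case where $C_1,\dots,C_n$ are one-dimensional, which makes the representability bookkeeping as simple as possible. This is where essentially all the work of the proof lies.
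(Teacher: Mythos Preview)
Your plan for the first part is correct and is exactly what the paper does. The algebra you are fighting with is the one-line identity
\[
I(\Pair{A}{C};\Pair{B}{C}) \;=\; H(\Pair{A}{C})+H(\Pair{B}{C})-H(\Triple{A}{B}{C}) \;=\; I(A;B|C)+H(C),
\]
which summed over $i$ turns \eqref{eq:indep} into \eqref{eq:np2varineq} immediately.

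The second part has a genuine conceptual gap. An \emph{instance} of a $k$-variable linear rank inequality on the variables $A,B,C_1,\dots,C_n$ is given by a map $f$ from $\{t_1,\dots,t_k\}$ to \emph{arbitrary subsets} of $\{A,B,C_1,\dots,C_n\}$, not just to singletons. So, for example, the four-variable Ingleton inequality can be instantiated with $t_1\mapsto A$, $t_2\mapsto B$, $t_3\mapsto C_1$, $t_4\mapsto C_2C_3\cdots C_n$; this instance involves \emph{all} $n+2$ of the original variables. Showing that every $(n{+}1)$-variable restriction of your polymatroid is representable therefore does \emph{not} imply that your polymatroid satisfies all instances of $(\le n{+}1)$-variable linear rank inequalities: any instance using compound substitutions that touch all $n+2$ original variables escapes your test entirely. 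Your implication ``no inequality on $n+1$ or fewer variables --- applied to any subset of the variables --- can rule it out'' is exactly the step that fails.

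The paper handles this with a different and sharper device. It writes down an explicit rank vector $v$ that violates \eqref{eq:indep}, and then produces $n+2$ auxiliary vectors $w_A,w_B,w_1,\dots,w_n$, each agreeing with $v$ at every coordinate except \emph{one} (namely $v(A)$, $v(B)$, or $v(BC_i)$), and shows each $w$ is linearly representable. Now suppose a $k$-variable linear rank inequality, via some substitution $f$, failed for $v$. Since each $w$ is representable, the instance holds for each $w$; so the instance must actually use the single coordinate where $v$ and $w$ differ. Tracing what ``uses the coordinate $\{A\}$'' (resp.\ $\{B\}$, $\{B,C_i\}$) forces on $f$ yields $n+2$ distinct indices $j_A,j_B,j_1,\dots,j_n$ among $\{1,\dots,k\}$, contradicting $k<n+2$. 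This argument works uniformly against compound substitutions, which your restriction-based criterion does not.
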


\begin{proof}
First, it is not hard to show that \eqref{eq:indep} is equivalent to
\eqref{eq:np2varineq}, and we have already seen that \eqref{eq:np2varineq}
is a linear rank inequality (this can also be proved using
Theorem~\ref{thm:tree2}), so \eqref{eq:indep} is a linear
rank inequality.

In the following, if $S = \{i_1,i_2,\dots,i_k\}
\subseteq \{1,2,\dots,n\}$, we will write $C_S$ for
$C_{i_1}C_{i_2}\dotsm C_{i_k}$.

Define a rank vector $v$ on the subsets of $\{A,B,C_1,C_2,\dots,C_n\}$
as follows: for any $S \subseteq \{1,2,\dots,n\}$,
\begin{align*}
      v(C_S) &= 2|S|,\\
     v(AC_S) &= n + |S|,\\
     v(BC_S) &= \min(2n - 2 + |S|,\ 2n),\\
    v(ABC_S) &= \min(2n - 1 + |S|,\ 2n).
\end{align*}
One can easily check that $v$ does not satisfy \eqref{eq:indep}.  We will
show that $v$ does satisfy all instances (using the variables
$A,B,C_1,C_2,\dots,C_n$) of all linear rank inequalities on fewer than
$n+2$ variables; this will imply that \eqref{eq:indep} is not
a consequence of these instances, as desired.

For this purpose, we construct rank vectors
$w_A,w_B,w_1,w_2,\dots,w_n$, each of which is the same as $v$ except
for one value.  The changed values are:
\begin{align*}
      w_A(A) &= n - 1,\\
      w_B(B) &= 2n - 3,\\
   w_i(BC_i) &= 2n.
\end{align*}

We will show that each of these $w$ vectors is linearly representable
over any infinite or sufficiently large finite field $F$.  In each case,
the representation will use a vector space $V$ over $F$ of dimension
$2n$, with a basis $x_1,x_2,\dots,x_n,y_1,y_2,\dots,y_n$, and the variable
$C_j$ ($1 \le j \le n$) will be represented by the two-dimensional subspace
$\langle x_j,y_j\rangle$.

For the representations of $A$ and $B$, instead of giving explicit formulas,
it will be convenient to use the following concept.  Suppose $U$ is a nontrivial
subspace of $V$.  A point $u \in U$ is said to be \textit{in general position}
in $U$, relative to a given finite set $S$ of points (if $S$ is not specified,
then we let $S$ be the set of all points that have previously been
mentioned explicitly), if $u$
does not lie in any subspace $U'$ of $V$ spanned by a subset of $S$
unless $U'$ includes all of $U$.  If the set $S$ is of size bounded by $N$, then
the ``in general position'' condition excludes at most $2^N$ proper subspaces
of $U$ (including the trivial subspace), so
there is no problem finding points in general position as long as
the field size is greater than $2^N$.  If we refer to multiple points being
chosen in general position, then they should be considered as chosen
successively, with later points being in general position relative to
earlier points as well as the previous set $S$.
This concept has been referred to by various terms; for instance, in
in~\cite{Mayhew-Newman-Whittle} such points are referred to as ``freely placed''.
Points chosen in this way make it easy to compute augmented subspace dimensions:
if $u$ is in general position in $U$ relative to $S$ and $U'$ is a subspace
spanned by points
in $S$, then $\dim(\langle U',u\rangle)$ is equal to
$\dim(U')+1$ unless $U \subseteq U'$, in which case it is
equal to $\dim(U')$.

For each $i \le n$, a representation of $w_i$ is obtained by assigning to $A$
the space $$X = \langle x_1,x_2,\dots,x_n\rangle$$ and assigning to $B$ the space
spanned by all of the $x$ vectors except $x_i$, together with $n-1$ additional
points chosen in general position in $V$.

For the representation of $w_B$, we again assign to $A$ the space
$X$; $B$ is assigned a space spanned
by $n-2$ points in general position in $X$ together with $n-1$ additional
points in general position in $V$.

To represent $w_A$, choose points $z_1,z_2,\dots,z_{n-1}$ in general
position in $X$, and assign to $A$ and $B$ the spaces
$\langle z_1,z_2,\dots,z_{n-1}\rangle$ and
$\langle z_1,z_2,\dots,z_{n-2},y_1,y_2,\dots,y_n\rangle$, respectively.

It remains to show that, if $C(t_1,\dots,t_k)\ge 0$ is a linear rank
inequality on $k$ variables with $k < n+2$, then no instance of this
inequality fails for $v$.  An instance of this inequality which applies
to $v$ is given by a map $f$ from $\{t_1,\dots,t_k\}$ to the subsets
of $\{A,B,C_1,\dots,C_n\}$.  (Then the definition of $f$ can be immediately
extended to the subsets of $\{t_1,\dots,t_k\}$ by the formula
$f(\{t_{j_1},\dots,t_{j_m}\})=f(t_{j_1})\cup\dots\cup f(t_{j_m})$.)
So suppose we have an instance, given by $C$ and $f$ as above, which
fails for $v$.  Since $C(t_1,\dots,t_k)\ge 0$ is a linear rank
inequality, the instance must not fail for the representable vector
$w_A$.  Therefore, the instance must use the value where $v$ disagrees
with $w_A$.  This means that there is a subset of $\{t_1,\dots,t_k\}$
which is mapped by $f$ to $\{A\}$; it follows that there is some
single value $j_A \in \{1,2,\dots,k\}$ such that $f(t_{j_A}) = \{A\}$.
Similarly, since
the instance must not fail for $w_B$, there is a subset of
$\{t_1,\dots,t_k\}$ which is mapped by $f$ to $\{B\}$, so there exists
$j_B \in \{1,2,\dots,k\}$ such that $f(t_{j_B}) = \{B\}$.  And, for each
$i \le n$, the instance must not fail for $w_i$, so there is a subset
of $\{t_1,\dots,t_k\}$
which is mapped by $f$ to $\{B,C_i\}$; hence, there exists
$j_i \in \{1,2,\dots,k\}$ such that $f(t_{j_i})$ is either $\{C_i\}$
or $\{B,C_i\}$.  It is clear from these $f$ values that the
numbers $j_A,j_B,j_1,j_2,\dots,j_n$ are distinct; but this is
impossible because $\{1,2,\dots,k\}$ has fewer than $n+2$ members.
This contradiction completes the proof of the theorem.
\end{proof}

\section{Concurrent work and open questions}

During the preparation of this paper, the authors became
aware of closely related concurrent work.
Chan, Grant, and Kern \cite{Chan-Grant-Kern} show nonconstructively
that there exist
linear rank inequalities not following from the Ingleton inequality.
Kinser \cite{Kinser} presents a sequence of inequalities which can
be written in the form
\begin{equation}
\label{eq:Kinser}
I(A_2;A_3) \le I(A_1;A_2) + I(A_3;A_n|A_1) + \sum_{i=4}^n I(A_2;A_{i-1}|A_i)
\end{equation}
for $n \ge 4$.  (This is a variant of \eqref{eq:starone} which follows
from Theorem~\ref{thm:list1}; the instance for $n=4$ and $n=5$ are
permuted-variable forms of the Ingleton inequality and inequality~(1c),
respectively.)  Kinser shows that \eqref{eq:Kinser} is a linear
rank inequality for each $n \ge 4$ and uses a method similar to
the proof of Theorem~\ref{thm:independence} above to show that
instance~$n$ of~\eqref{eq:Kinser} is not a consequence of linear rank
inequalities on fewer than $n$ variables.
(The authors
found the proof of Theorem~\ref{thm:independence} after the
initial posting date
of~\cite{Kinser}, but independently.)

Here are some fundamental open questions that this research has not
yet answered.

1) For each fixed $n$, are there finitely many linear rank inequalities
on $n$ variables which imply all of the others?

2) Is the method of using common informations incomplete?  That is, are
there linear rank inequalities that cannot be proved from the basic
technique of assuming the existence of common informations?

 The authors would like to thank James Oxley for helpful
discussions.

% ---------------------------------------------------------------------------
\clearpage

\renewcommand{\baselinestretch}{1.0}

\end{document}